\documentclass[preprint,12pt]{elsarticle}

\usepackage{amssymb}
\usepackage{amsmath}
\usepackage{amsfonts}
\usepackage{mathtools}
\usepackage{bm} 
\usepackage{url}

\usepackage{stmaryrd} 
\usepackage{amsthm}   

\usepackage{enumitem} 
\usepackage[utf8]{inputenc}
\usepackage[T1]{fontenc}

\theoremstyle{plain}
\newtheorem{theorem}{Theorem}[section]
\newtheorem{lemma}[theorem]{Lemma}
\newtheorem{proposition}[theorem]{Proposition}
\newtheorem{corollary}[theorem]{Corollary}

\theoremstyle{definition}
\newtheorem{definition}[theorem]{Definition}

\newtheorem{remark}[theorem]{Remark}

\begin{document}
	
	\begin{frontmatter}
		
		\title{A Dual-Threshold Probabilistic Knowing Value Logic}
		
		\author[label1]{Shanxia Wang}
		\ead{wangshanxia@htu.edu.cn}
		
\affiliation[label1]{
	organization={School of Computer and Information Engineering (School of Artificial Intelligence)},
	addressline={Henan Normal University}, 
	city={Xinxiang},
	postcode={453007}, 
	state={Henan},
	country={China}
}
		
		\begin{abstract}
			A central problem in knowledge representation for uncertain multi-agent settings is how to accommodate, within a single formalism, both probabilistic-threshold attitudes toward propositions and high-confidence attitudes toward term values. Existing probabilistic epistemic logics mainly address graded attitudes toward propositional events, whereas classical knowing-value logics focus on value determination under relational semantics. This paper bridges the gap by introducing a \textbf{dual-threshold probabilistic knowing value logic}. The framework is especially relevant in privacy-sensitive settings, where an attacker may assign high posterior probability to one candidate value of a sensitive attribute without that value being guaranteed to be the actual one.
			
			The key idea is to separate the threshold domains of propositional and value-oriented operators. The operator $K_i^\theta$ retains the full rational threshold range, whereas the knowing-value operator $Kv_i^\eta(t)$ is restricted to the high-threshold interval $(\frac{1}{2},1]$. This restriction is structural. Once $\eta>\frac{1}{2}$, two distinct values cannot both satisfy the threshold, so the uniqueness condition for knowing-value becomes automatic. Over probabilistic models with countably additive measures, $Kv_i^\eta(t)$ is interpreted as \textbf{non-factive high-confidence value locking}.
			
			On the proof-theoretic side, the paper establishes sound axiomatic systems for the framework and develops a two-layer construction based on \textbf{type-space distributions and assignment-configuration mappings}. This construction resolves the joint realization problem generated by probabilistic mass allocation and value-sensitive constraints in multi-agent, multi-term settings. As a result, the paper proves a \textbf{structured weak-completeness} theorem for the high-threshold fragment. The framework thus provides a unified formal treatment of probabilistic reasoning about propositions and high-confidence knowing-value reasoning about term values in multi-agent settings.
		\end{abstract}

\begin{keyword}
	Probabilistic epistemic logic \sep knowing-value logic \sep probabilistic knowing-value logic \sep privacy reasoning \sep non-factive semantics \sep structured weak completeness
\end{keyword}

	\end{frontmatter}

\section{Introduction}
\label{sec:intro}

A central problem in artificial intelligence, knowledge representation, and multi-agent systems is how to represent agents' epistemic states in a way that is both expressive and formally well behaved. Classical epistemic logic provides a natural framework for factive \emph{know-that} attitudes. Its Kripke-style relational semantics, however, is better suited to binary epistemic distinctions than to graded confidence, subjective uncertainty, or probabilistic judgment \cite{Fagin1995RAK}. Probabilistic epistemic logic addresses this limitation by enriching possible-world semantics with probability measures. In this way, it yields a finer-grained account of belief, confidence, and uncertainty about events, and has become a standard framework for uncertain reasoning, multi-agent decision making, and adversarial modeling \cite{Fagin1990Prob,Halpern1993Know,Fagin1994Reason,Halpern2003Uncert,Liu2023Progression}. In many AI applications, this additional granularity is indispensable.

Yet in many epistemic tasks, the issue is not merely whether a proposition is likely to be true, but whether an agent can identify the \emph{value} of an object, variable, or parameter. In privacy analysis, for example, one may ask whether an attacker can lock onto one specific candidate value of a sensitive attribute with sufficiently high confidence. More generally, similar questions arise whenever an agent attempts to identify the value of a hidden but decision-relevant parameter. Such attitudes are inherently value-oriented, and they are not straightforwardly reducible to ordinary propositional knowledge. Under relational semantics, their defining feature is that the agent can exclude competing candidates and thereby identify the value of a term. Knowing-value logic has studied this phenomenon in depth \cite{Baltag2016Know,Gu2016KV,vanEijck2017Inspect,Wang2022TermModal}. What remains missing is a natural integration of knowing-value reasoning with probabilistic-threshold cognition.

\paragraph{Illustrative privacy scenario.}
Consider a privacy attacker who tries to infer the value of a sensitive attribute $t$, such as a disease category or an income bracket. Suppose that the attacker assigns posterior probabilities $0.62$, $0.23$, and $0.15$ to the candidate values of $t$. It is then natural to say that the attacker has locked onto one specific candidate value with high confidence, even though that candidate value need not be the actual one at the current world. This illustrates the intended non-factive reading of probabilistic knowing-value. It also shows why value-oriented cognition cannot be captured by propositional threshold acceptance alone: $K_i^\theta\varphi$ may express that a proposition is sufficiently probable, but it does not express that one specific candidate value of a sensitive attribute is uniquely supported above the threshold. By contrast, $Kv_i^\eta(t)$ is designed to express exactly such unique high-confidence support. The same example also motivates the dual-threshold design. If the posterior probabilities were instead $0.42$, $0.37$, and $0.21$, lowering the threshold could still support acceptance of various propositions, but it would no longer support high-confidence locking onto one specific sensitive-attribute value. This is why the knowing-value operator must be restricted to the high-threshold interval $(\frac{1}{2},1]$.

This integration faces two obstacles. The first is semantic. Classical knowing-value semantics is defined in terms of universal agreement across epistemically possible worlds. In a probabilistic setting, that idea must be reconstructed in terms of sufficiently concentrated probability mass. The second obstacle concerns thresholds. For propositional threshold operators, lowering the threshold typically enlarges the set of propositions an agent is prepared to accept. For knowing-value operators, by contrast, lowering the threshold may allow several competing values to cross the threshold simultaneously. This undermines the uniqueness condition built into knowing-value semantics. A uniform treatment of thresholds is therefore unstable at the metatheoretic level.

This paper addresses this problem by proposing a \textbf{dual-threshold probabilistic knowing value logic}. The guiding idea is simple but structurally important: propositional cognition and value cognition are governed by different threshold domains. The operator $K_i^\theta$ retains the full range $\theta\in[0,1]$, so that probabilistic reasoning about propositions preserves its usual expressive flexibility. By contrast, the knowing-value operator $Kv_i^\eta(t)$ is restricted to the high-threshold interval $(\frac{1}{2},1]$. This restriction is structural. Once $\eta>\frac{1}{2}$, two distinct values cannot both reach the threshold, so the uniqueness requirement for knowing-value becomes automatic. In this way, the instability caused by multiple competing low-threshold candidates is eliminated, yielding a viable basis for axiomatization.

Semantically, we interpret the logic on probabilistic models with countably additive probability measures. The operator $Kv_i^\eta(t)$ is read as a form of \textbf{non-factive high-confidence value locking}. An agent may uniquely assign sufficiently high probability to one candidate value without that value coinciding with the actual value at the current world. This gives the propositional and value-oriented operators a \textbf{coherent} non-factive epistemic profile. On the proof-theoretic side, the main technical challenge is to handle value conflicts and probability-mass allocation in multi-agent, multi-term settings. To address this challenge, we introduce a reconstruction method based on \textbf{assignment-configuration mappings}. Rather than fixing a privileged value for each agent in advance, we consider probability distributions over possible term--value assignments. Together with a joint value-fiber consistency condition \textbf{(Fib)}, this yields a two-layer construction based on \textbf{type-space distributions and assignment-configuration mappings}. The construction supports a proof of \textbf{structured weak completeness} for the high-threshold fragment.

The main contributions of the paper are as follows:
\begin{enumerate}
	\item We propose \textbf{a dual-threshold probabilistic knowing value logic}. The dual-threshold design reconciles expressive adequacy for probabilistic propositional cognition with structural stability for knowing-value semantics.
	
	\item We introduce \textbf{a non-factive probabilistic semantics for knowing value}. This semantics yields a \textbf{coherent} epistemic interpretation for propositional and value-oriented operators, while making uniqueness automatic above $\frac{1}{2}$.
	
	\item We establish corresponding \textbf{axiomatic systems and soundness results}. These systems isolate the core valid principles of the framework and show that the proof theory is not an ad hoc juxtaposition of probabilistic and value-oriented components.
	
	\item We develop \textbf{a structured weak-completeness method based on type-space distributions and assignment-configuration mappings}. The truth lemma is established by a single induction on modal depth, which ensures that the probability computation at each level relies only on the semantic identification at strictly lower levels, thereby resolving the mutual dependence between measure realization and truth evaluation. This method shows that probabilistic mass allocation and multiple value constraints can still be handled within a single metatheoretic framework.
\end{enumerate}

Overall, the paper advances a single methodological claim: probabilistic reasoning about propositions and high-confidence reasoning about values can be treated within one multi-agent formalism, provided that the asymmetry between their threshold behaviors is built into the language and reflected in the semantics and completeness construction. The rest of the paper is organized as follows. Section~\ref{sec:related} reviews the most relevant related work. Section~\ref{sec:lang} introduces the language, models, and satisfaction relation. Section~\ref{sec:properties} establishes the main semantic properties. Section~\ref{sec:axioms} presents the axiomatic systems, and Section~\ref{sec:soundness} proves their soundness. Section~\ref{sec:completeness} develops the type-space and assignment-configuration construction and proves structured weak completeness for the high-threshold fragment. Section~\ref{sec:conclusion} concludes.


\section{Related Work}
\label{sec:related}

This paper lies at the intersection of probabilistic epistemic logic,
semantic and completeness methods for probabilistic belief models, and
knowing-value logic together with more general value-oriented epistemic
logics. Each of these areas is already well developed. What is still
missing, however, is a unified framework that treats graded probabilistic
attitudes toward propositions and high-confidence attitudes toward term
values within the same multi-agent language, under a common semantic and
proof-theoretic treatment. The present paper is intended to fill that gap.


A first line of work concerns probability, knowledge, and uncertainty.
Against the broader background of epistemic logic in multi-agent systems
\cite{Fagin1995RAK}, Fagin, Halpern, and Megiddo developed the basic
logical treatment of probabilistic constraints over possible-world models
\cite{Fagin1990Prob}. Halpern and Tuttle then studied the interaction of
knowledge and probability in multi-agent and adversarial settings
\cite{Halpern1993Know}, while Fagin and Halpern further clarified the
logic of combined epistemic and probabilistic reasoning
\cite{Fagin1994Reason}. Halpern also showed that knowledge should not be
identified with probability-$1$ belief \cite{Halpern1991Certain}, and
later provided a broad synthesis of reasoning under uncertainty
\cite{Halpern2003Uncert}. More recent work has extended this tradition in
several directions, including countably additive semantics,
neighborhood-style semantics, strong completeness, decidability,
likelihood and belief, conditional probability, and belief progression
\cite{Doder2024Countable,PanGuo2024,Ognjanovic2024Strong,Ognjanovic2024Dec,Delgrande2022ELLB,Dautovic2023Cond,Liu2023Progression}.
Computational developments have also appeared, for example in model
checking for probabilistic epistemic logics over probabilistic multi-agent
systems \cite{Fu2018MCPETL}. This literature provides a mature account of
graded attitudes toward \emph{propositions}. What it does not usually
provide is a semantics for high-confidence identification of a
\emph{specific value} of a term. Our framework preserves the
probabilistic-threshold treatment of propositions, but extends the
analysis from events to values.


A second line of work concerns the semantic machinery used to represent
probabilistic epistemic states and to prove completeness. Type-space
semantics has been especially important in this context. Heifetz and
Mongin studied probability logic on Harsanyi type spaces
\cite{Heifetz2001Type}. Meier developed an infinitary probability logic
and established strong completeness results \cite{Meier2012Infin}. Zhou
further investigated probability logic for Harsanyi type spaces
\cite{Zhou2014Type}. The general lesson of this line of work is that, once
probabilistic constraints enter the object language, plain relational
semantics is often too coarse for completeness-theoretic purposes, and
richer local probabilistic structure becomes necessary.

A common methodological motif in these completeness arguments is the
\emph{realizability of local probabilistic constraints}. Fagin and
Halpern's completeness proof for the logic of combined epistemic and
probabilistic reasoning already follows this pattern: finite conjunctions
of linear threshold conditions derived from a consistent formula are shown
to be simultaneously satisfiable by a local probability measure at each
world \cite{Fagin1994Reason}. The type-space constructions of Heifetz and
Mongin \cite{Heifetz2001Type} and Zhou \cite{Zhou2014Type} extend this
motif by showing that locally coherent probability assignments can be
assembled into a global model. This perspective directly informs the
present paper. Our completeness construction also relies on local
probabilistic realizability, but it must additionally track term equality
and support value-sensitive operators. For that reason, the construction
adds a second layer on top of local type-space distributions, namely an
\emph{assignment-configuration layer} organized by value-fiber constraints.
In methodological terms, the paper therefore extends the local-constraint
realizability paradigm of probabilistic epistemic completeness to a
setting in which probabilistic mass-allocation constraints and
value-theoretic uniqueness constraints must be realized jointly.


A third line is the literature on knowing-value and related
value-oriented epistemic logics. Baltag's proposal that to know is to
know the value of a variable established knowing-value as an epistemic
notion in its own right \cite{Baltag2016Know}. Gu and Wang showed that
knowing-value logic admits a normal modal treatment \cite{Gu2016KV}, and
van Eijck, Gattinger, and Wang connected knowing-value to public
inspection and epistemic update \cite{vanEijck2017Inspect}. Later work
broadened the value-oriented perspective in several directions. Functional
dependence logics study structural relations between values
\cite{Baltag2021LFD}. Dynamic data-informed knowledge links value
information to update \cite{Deuser2024Data}. Richer object-level settings
have been developed through epistemic term-modal logics with assignment,
predicate-value operators, temporal knowing-value formalisms, and dynamic
term-modal approaches to epistemic planning
\cite{Wang2022TermModal,Hong2023Predicate,Lin2021LTLKV,Liberman2020Planning}.
Taken together, these studies show that epistemic reasoning about values
is now a substantial topic in its own right.

Still, most knowing-value frameworks remain relational in character: an
agent knows the value when all relevant accessible worlds agree on it.
The present paper departs from this tradition in two respects. First, we
reconstruct knowing-value in a probabilistic setting as
\emph{high-confidence value locking}: the key issue is no longer universal
agreement across accessible worlds, but whether there is a uniquely
supported value above a suitably high threshold. Second, the resulting
semantics is \textbf{non-factive}: the value that the agent locks onto
need not coincide with the actual value at the current world. This
contrasts with the standard relational reading, where knowing-value
entails that the identified value is in fact the correct one.


A fourth line concerns AI-oriented extensions of epistemic and
probabilistic reasoning, especially in dynamics, planning, revision, and
verification. Probabilistic dynamic epistemic logic and probabilistic
conformant planning already show how epistemic probability can interact
with update and action \cite{Sack2009PDEL,Li2019Prob}. On the
non-probabilistic but closely related side, recent work in epistemic
planning has studied decidability, complexity, tractable fragments,
higher-order belief, justified perspectives, observation, revision, and
meta-level planning
\cite{Bolander2020DEL,Cooper2021Lightweight,Wan2021MAEP,Muise2022NestedBelief,Hu2023JustifiedPerspectives,Engesser2024EDP,LiWang2024PlanAboutPlanning,Belle2023SpecialIssue}.
The 2025 literature pushes this line further. Probabilistic knowing-how
has been studied as a graded epistemic attitude \cite{Castro2025ProbKH}.
Probabilistic belief revision has been investigated as a formal update
problem \cite{Delgrande2025ProbRevision}. Belief-based programs have also
been connected to verification under noisy actions and sensing
\cite{LiuLakemeyer2025BeliefPrograms}. These developments matter here for
two reasons. First, they show that AI increasingly requires epistemic
formalisms that are graded, revisable, and computationally meaningful.
Second, they show that value-sensitive information is not peripheral: once
planning, revision, and verification are carried out in object-rich
multi-agent settings, reasoning about specific values becomes a natural
part of the epistemic state.


Against this background, the gap addressed in this paper can be stated
directly. Existing probabilistic epistemic logics provide tools for graded
reasoning about propositions, but not for probabilistic knowing-value over
terms. Existing knowing-value logics provide tools for reasoning about
values, but typically under relational agreement semantics rather than
probabilistic confidence semantics. Existing semantic and completeness
techniques for probabilistic logic provide important guidance, but they do
not by themselves resolve the interaction between probabilistic thresholds
and value-sensitive uniqueness conditions. The present paper addresses this
gap by combining these strands within a single framework. It introduces a
dual-threshold language that separates propositional probability thresholds
from high-threshold knowing-value thresholds; it gives knowing-value a
non-factive probabilistic semantics in terms of unique high-confidence
support; and it develops a structured weak-completeness argument based on
type-space distributions and assignment-configuration mappings. The
contribution is therefore not merely to juxtapose probabilistic epistemic
logic and knowing-value logic, but to \textbf{provide a unified
	representational and proof-theoretic framework} that bridges them.


\section{Formal Language and Semantics}
\label{sec:lang}

\subsection{Syntax}
\label{subsec:syntax}

We consider a multi-agent modal language that simultaneously captures probabilistic-threshold reasoning about propositions and high-confidence knowing-value reasoning. Unlike approaches based on a uniform threshold regime, our framework adopts a \textbf{dual-threshold architecture} already at the level of the object language: the propositional operator $K_i^\theta$ allows any rational threshold in the interval $[0,1]$, whereas the knowing-value operator $Kv_i^\eta(t)$ is restricted from the outset to the high-threshold interval $(\frac{1}{2},1]$. This distinction is motivated by structural rather than merely notational considerations. In the high-threshold range, the uniqueness condition in the semantics of knowing-value becomes automatic, which substantially simplifies the metatheoretic analysis developed later. At the same time, the propositional probabilistic operator retains general thresholds in order to preserve the expressive power of threshold-based probability logic. In this respect, the language continues the explicit treatment of object values familiar from knowing-value and value-oriented epistemic logics, while remaining aligned with the standard treatment of probabilistic threshold operators in probabilistic epistemic logic \cite{Baltag2016Know,Gu2016KV,Wang2022TermModal,Fagin1990Prob}.

Let $\mathsf{Prop}=\{p,q,r,\dots\}$ be a countable set of propositional variables, let $\mathsf{Term}=\{t,s,u,\dots\}$ be a countable set of atomic term symbols, and let $\mathcal{A}=\{1,\dots,n\}$ be a finite set of agents. We define two threshold sets:
\[
\Theta_K=[0,1]\cap\mathbb{Q},
\qquad
\Theta_V^{+}=\Bigl(\frac{1}{2},1\Bigr]\cap\mathbb{Q}.
\]

\begin{definition}[Language $\mathcal{L}_{\mathrm{PTMLKv}}$]
	The set of formulas of $\mathcal{L}_{\mathrm{PTMLKv}}$ is generated by the following grammar:
	\[
	\varphi,\psi ::= p \mid t=s \mid \neg \varphi \mid (\varphi \to \psi)
	\mid K_i^\theta \varphi \mid Kv_i^\eta(t),
	\]
	where $p\in\mathsf{Prop}$, $t,s\in\mathsf{Term}$, $i\in\mathcal{A}$, $\theta\in\Theta_K$, and $\eta\in\Theta_V^{+}$.
\end{definition}

Here, $t=s$ states that the terms $t$ and $s$ have the same value; $K_i^\theta\varphi$ says that agent $i$ assigns probability at least $\theta$ to the proposition $\varphi$; and $Kv_i^\eta(t)$ says that agent $i$ knows the value of the term $t$ with confidence at least the threshold $\eta$.

We use conjunction $\varphi\wedge\psi$, disjunction $\varphi\vee\psi$, biconditional $\varphi\leftrightarrow\psi$, and the constants $\top,\bot$ as standard abbreviations.

\begin{remark}
	In the dual-threshold setting adopted here, propositional operators and knowing-value operators belong to a common language, but their threshold domains are distinct. This distinction is not introduced merely for notational convenience; it is required for the metatheoretic analysis developed later. For $K_i^\theta$, retaining general thresholds makes it possible to express probabilistic monotonicity, complementarity, and additive behavior supported by probability measures. For $Kv_i^\eta(t)$, by contrast, restricting thresholds directly to the high-threshold interval removes from the outset the degeneracy and non-monotonic complications that arise at lower thresholds. More broadly, the present setting combines the expressive strengths of threshold operators in probabilistic logic with the term-centered character of value-oriented epistemic logics \cite{Fagin1990Prob,Halpern2003Uncert,Baltag2016Know,Wang2022TermModal}.
\end{remark}

\subsection{Models}
\label{subsec:models}

\begin{definition}[Multi-agent probabilistic knowing-value model]
	An $\mathcal{L}_{\mathrm{PTMLKv}}$-model is a tuple
	\[
	M=(W,D,\{P_i\}_{i\in\mathcal{A}},V,\mathsf{val}),
	\]
	where:
	\begin{enumerate}
		\item $W\neq\emptyset$ is a nonempty set of possible worlds;
		\item $D\neq\emptyset$ is a nonempty domain of values;
		\item for each agent $i\in\mathcal{A}$ and each world $w\in W$, 
		$P_i(w):\mathcal{P}(W)\to[0,1]$ is a countably additive probability measure on the powerset of $W$, that is:
		\begin{itemize}
			\item $P_i(w)(\emptyset)=0$ and $P_i(w)(W)=1$;
			\item for every countable pairwise disjoint family $\{X_k\}_{k\in\mathbb{N}}$,
			\[
			P_i(w)\Bigl(\bigcup_{k\in\mathbb{N}} X_k\Bigr)
			=
			\sum_{k\in\mathbb{N}} P_i(w)(X_k).
			\]
		\end{itemize}
		\item $V:W\times\mathsf{Prop}\to\{0,1\}$ is a propositional valuation function;
		\item $\mathsf{val}:W\times\mathsf{Term}\to D$ is a term-value function, where $\mathsf{val}(w,t)$ denotes the value of the atomic term $t$ at world $w$.
	\end{enumerate}
	A pair $(M,w)$ with $w\in W$ is called a pointed model.
\end{definition}

This choice of model preserves the basic world--event--measure organization familiar from probabilistic epistemic logic, while explicitly adopting countably additive semantics in order to remain compatible with the countable fiber constructions and probability-mass allocations used later in the completeness proof \cite{Fagin1990Prob,Doder2024Countable,Ognjanovic2024Strong}.

For any model $M$, term $t\in\mathsf{Term}$, and value $d\in D$, we define the meta-language event
\[
\llbracket t=d \rrbracket^M := \{u\in W \mid \mathsf{val}(u,t)=d\}.
\]

Observe that for each fixed term $t$, the family of events
\[
\bigl\{\llbracket t=d\rrbracket^M \mid d\in D\bigr\}
\]
is pairwise disjoint and has union equal to the whole world set $W$. In other words, it forms the partition of $W$ induced by the possible values of the term $t$.

\subsection{Satisfaction Relation}
\label{subsec:satisfaction}

\begin{definition}[Satisfaction relation]
	Given a model $M=(W,D,\{P_i\}_{i\in\mathcal{A}},V,\mathsf{val})$ and a world $w\in W$, let
	\[
	\llbracket \varphi \rrbracket^M=\{u\in W\mid M,u\models\varphi\}.
	\]
	The satisfaction relation $M,w\models\varphi$ is recursively defined as follows:
	\begin{align*}
		M,w\models p
		&\iff V(w,p)=1;\\
		M,w\models t=s
		&\iff \mathsf{val}(w,t)=\mathsf{val}(w,s);\\
		M,w\models \neg\varphi
		&\iff M,w\not\models\varphi;\\
		M,w\models \varphi\to\psi
		&\iff M,w\not\models\varphi \text{ or } M,w\models\psi;\\
		M,w\models K_i^\theta\varphi
		&\iff P_i(w)\bigl(\llbracket\varphi\rrbracket^M\bigr)\ge \theta;\\
		M,w\models Kv_i^\eta(t)
		&\iff \exists! d\in D\ \text{s.t.}\ P_i(w)\bigl(\llbracket t=d\rrbracket^M\bigr)\ge \eta.
	\end{align*}
\end{definition}

\begin{remark}
	Both $K_i^\theta\varphi$ and $Kv_i^\eta(t)$ are interpreted here by non-factive probabilistic-threshold semantics. The formula $K_i^\theta\varphi$ captures an agent's threshold-based probabilistic attitude toward the proposition $\varphi$, whereas $Kv_i^\eta(t)$ captures a unique form of high-confidence value locking onto the value of the term $t$. In particular, $Kv_i^\eta(t)$ does not require the uniquely locked candidate value to coincide with the actual value of $t$ at the current world. It is therefore a non-factive probabilistic-threshold generalization of classical knowing-value semantics. This treatment is also in line with the well-known observation that classical knowledge, belief, and probability-$1$ attitudes do not simply coincide \cite{Halpern1991Certain,Fagin1994Reason}.
\end{remark}

\section{Basic Semantic Properties}
\label{sec:properties}

This section collects a number of basic semantic properties that will be used repeatedly in the sequel. The properties of the propositional operator $K_i^\theta$ are derived mainly from standard inequalities for countably additive probability measures, whereas the high-threshold behavior of the knowing-value operator $Kv_i^\eta(t)$ depends on the pairwise disjointness of value-class events. These results serve both as the immediate semantic basis for the axiomatic systems introduced later and as natural counterparts of familiar threshold-reasoning patterns in probabilistic logic and probabilistic epistemic logic \cite{Fagin1990Prob,Halpern2003Uncert,Doder2024Countable}.

\subsection{Basic Lemmas on Probability Measures}
\label{subsec:probability-measures}

\begin{lemma}[Monotonicity of probability measures]\label{lem:mono}
	Let $P:\mathcal{P}(W)\to[0,1]$ be a countably additive probability measure on $W$. If $X\subseteq Y\subseteq W$, then
	\[
	P(X)\le P(Y).
	\]
\end{lemma}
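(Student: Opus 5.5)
The plan is to decompose $Y$ into two disjoint pieces and use additivity together with non-negativity of $P$. Since $X\subseteq Y$, we can write
\[
Y = X \cup (Y\setminus X),
\qquad
X\cap (Y\setminus X)=\emptyset .
\]
Both pieces lie in $\mathcal{P}(W)$, because the measure is defined on the full powerset, so no measurability issue arises.

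Next we need finite additivity, which is not stated explicitly in the definition of the model; only countable additivity and $P(\emptyset)=0$ are given. I would derive it by padding the two-set family into a countable disjoint family: set $X_0=X$, $X_1=Y\setminus X$, and $X_k=\emptyset$ for all $k\ge 2$. These sets are pairwise disjoint and their union is $Y$. Countable additivity then yields
\[
P(Y)=P(X)+P(Y\setminus X)+\sum_{k\ge 2}P(\emptyset)=P(X)+P(Y\setminus X),
\]
since $P(\emptyset)=0$ makes the tail of the series vanish.

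Finally, $P$ takes values in $[0,1]$, so $P(Y\setminus X)\ge 0$, and hence $P(X)\le P(Y)$.

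The only step needing any care is the reduction from countable to finite additivity. It relies on the normalization $P(\emptyset)=0$, which is part of the definition, so that the tail of the series vanishes. Everything else is immediate.
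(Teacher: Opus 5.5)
Your proof is correct and follows the paper's own argument: both use the disjoint decomposition $Y = X\cup(Y\setminus X)$, additivity, and $P(Y\setminus X)\ge 0$. Your extra step, padding with empty sets and using $P(\emptyset)=0$ to get finite additivity from countable additivity, makes explicit a detail the paper leaves implicit.
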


\begin{proof}
	Since
	\[
	Y=X\cup (Y\setminus X),
	\qquad
	X\cap (Y\setminus X)=\emptyset,
	\]
	additivity yields
	\[
	P(Y)=P(X)+P(Y\setminus X)\ge P(X).
	\]
	Hence $P(X)\le P(Y)$.
\end{proof}

\begin{lemma}[Equality up to a null set]\label{lem:null}
	Let $P:\mathcal{P}(W)\to[0,1]$ be a countably additive probability measure on $W$. If $N\subseteq W$ and $P(N)=0$, and if $X\triangle Y\subseteq N$, then
	\[
	P(X)=P(Y),
	\]
	where $X\triangle Y=(X\setminus Y)\cup (Y\setminus X)$ denotes the symmetric difference.
\end{lemma}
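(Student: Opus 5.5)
The plan is to compare both $P(X)$ and $P(Y)$ with the measure of their common part $X\cap Y$. The leftover pieces lie inside $N$, so they are null by Lemma~\ref{lem:mono}. Concretely, I would decompose
\[
X=(X\cap Y)\cup (X\setminus Y),
\qquad
Y=(X\cap Y)\cup (Y\setminus X),
\]
where each union is disjoint. Finite additivity, obtained from countable additivity by padding with empty sets and using $P(\emptyset)=0$, then gives
\[
P(X)=P(X\cap Y)+P(X\setminus Y),
\qquad
P(Y)=P(X\cap Y)+P(Y\setminus X).
\]

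Next I would show that both difference terms vanish. By hypothesis $X\setminus Y\subseteq X\triangle Y\subseteq N$, and likewise $Y\setminus X\subseteq N$. By Lemma~\ref{lem:mono},
\[
0\le P(X\setminus Y)\le P(N)=0,
\qquad
0\le P(Y\setminus X)\le P(N)=0.
\]
Hence $P(X)=P(X\cap Y)=P(Y)$.

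None of these steps is a real obstacle. The only point needing care is that finite additivity must be derived from the countable additivity clause in the model definition, exactly as the proof of Lemma~\ref{lem:mono} implicitly does. Since every set is measurable (the measure is defined on all of $\mathcal{P}(W)$), no measurability issues arise.
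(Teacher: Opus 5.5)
Your proof is correct and is essentially the paper's argument: you split $X$ and $Y$ through $X\cap Y$, use Lemma~\ref{lem:mono} to show $X\setminus Y$ and $Y\setminus X$ are null because they lie in $N$, and conclude $P(X)=P(X\cap Y)=P(Y)$. Your remark that finite additivity follows from countable additivity (by padding with empty sets and using $P(\emptyset)=0$) is a detail the paper leaves implicit.
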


\begin{proof}
	Since $X\setminus Y\subseteq N$, Lemma~\ref{lem:mono} implies
	\[
	P(X\setminus Y)\le P(N)=0,
	\]
	and hence $P(X\setminus Y)=0$. Similarly, $P(Y\setminus X)=0$.
	
	Moreover,
	\[
	X=(X\cap Y)\cup(X\setminus Y),
	\qquad
	Y=(X\cap Y)\cup(Y\setminus X),
	\]
	and both unions are disjoint. By additivity,
	\[
	P(X)=P(X\cap Y)+P(X\setminus Y)=P(X\cap Y),
	\]
	\[
	P(Y)=P(X\cap Y)+P(Y\setminus X)=P(X\cap Y).
	\]
	Therefore, $P(X)=P(Y)$.
\end{proof}

\subsection{Basic Properties of High-Threshold Knowing Value}
\label{subsec:high-threshold-kv}

\begin{proposition}[Automatic uniqueness at high thresholds]\label{prop:unique}
	For any model $M$, world $w$, agent $i$, and term $t$, if $\eta\in\Theta_V^{+}$, then
	\[
	M,w\models Kv_i^\eta(t)
	\;\iff\;
	\exists\, d\in D \text{ such that }
	P_i(w)\bigl(\llbracket t=d\rrbracket^M\bigr)\ge \eta.
	\]
	That is, whenever the threshold is strictly greater than $\frac{1}{2}$, the $\exists !$ condition in the semantics of knowing-value can be equivalently replaced by $\exists$.
\end{proposition}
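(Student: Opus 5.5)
The plan is to prove the two directions of the biconditional separately. The left-to-right direction is immediate from the satisfaction clause for $Kv_i^\eta(t)$: if there is a unique $d\in D$ with $P_i(w)(\llbracket t=d\rrbracket^M)\ge\eta$, then in particular there is such a $d$, since $\exists!$ entails $\exists$. Nothing about the threshold is needed here.

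For the right-to-left direction, I would assume some $d\in D$ with $P_i(w)(\llbracket t=d\rrbracket^M)\ge\eta$ and show that this $d$ is unique. Suppose for contradiction that $d'\neq d$ also satisfies $P_i(w)(\llbracket t=d'\rrbracket^M)\ge\eta$. The key observation, already noted after the definition of $\llbracket t=d\rrbracket^M$, is that value-class events for a fixed term are pairwise disjoint. This holds because $\mathsf{val}$ is a function, so no world $u$ can have both $\mathsf{val}(u,t)=d$ and $\mathsf{val}(u,t)=d'$. Finite additivity of $P_i(w)$, which is a special case of countable additivity obtained by padding with empty sets and using $P_i(w)(\emptyset)=0$, then gives
\[
P_i(w)\bigl(\llbracket t=d\rrbracket^M\cup\llbracket t=d'\rrbracket^M\bigr)
=P_i(w)\bigl(\llbracket t=d\rrbracket^M\bigr)+P_i(w)\bigl(\llbracket t=d'\rrbracket^M\bigr)\ge 2\eta>1,
\]
where the strict inequality uses $\eta>\frac{1}{2}$ because $\eta\in\Theta_V^{+}$. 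On the other hand, Lemma~\ref{lem:mono} applied to the inclusion of this union in $W$ bounds the left-hand side by $P_i(w)(W)=1$. This is a contradiction, so $d$ is unique and $M,w\models Kv_i^\eta(t)$.

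There is no real obstacle. The only points that need care are, first, the explicit justification that finite additivity follows from the countable additivity stated in the model definition, and second, the fact that the strictness $\eta>\frac12$ is exactly what turns $2\eta$ into a quantity exceeding $1$. At $\eta=\frac12$ the argument fails, as the example of two values each of mass $\frac12$ shows. I would add this last remark to explain why the threshold set is $\Theta_V^{+}$.
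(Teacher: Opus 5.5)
Your proposal is correct and follows essentially the same argument as the paper: reduce to showing there is at most one witness value, note that the value-class events for distinct $d\neq d'$ are disjoint because $\mathsf{val}$ is a function, and combine additivity with $P_i(w)(W)=1$ to obtain $2\eta\le 1$, which contradicts $\eta>\frac{1}{2}$. Your extra remarks, on deriving finite additivity from countable additivity and on the counterexample at $\eta=\frac{1}{2}$, are correct additions that the paper leaves implicit.
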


\begin{proof}
	By the semantic definition of $Kv_i^\eta(t)$,
	\[
	M,w\models Kv_i^\eta(t)
	\iff
	\exists! d\in D \text{ such that }
	P_i(w)\bigl(\llbracket t=d\rrbracket^M\bigr)\ge \eta.
	\]
	It therefore suffices to show that there is at most one $d$ satisfying
	\[
	P_i(w)\bigl(\llbracket t=d\rrbracket^M\bigr)\ge \eta.
	\]
	
	Assume, for contradiction, that there are two distinct values $d_1\neq d_2$ such that
	\[
	P_i(w)\bigl(\llbracket t=d_1\rrbracket^M\bigr)\ge \eta,
	\qquad
	P_i(w)\bigl(\llbracket t=d_2\rrbracket^M\bigr)\ge \eta.
	\]
	Since $\mathsf{val}$ is a function,
	\[
	\llbracket t=d_1\rrbracket^M\cap \llbracket t=d_2\rrbracket^M=\emptyset.
	\]
	Hence, by additivity and Lemma~\ref{lem:mono},
	\[
	P_i(w)\bigl(\llbracket t=d_1\rrbracket^M\bigr)
	+
	P_i(w)\bigl(\llbracket t=d_2\rrbracket^M\bigr)
	=
	P_i(w)\bigl(\llbracket t=d_1\rrbracket^M\cup \llbracket t=d_2\rrbracket^M\bigr)
	\le P_i(w)(W)=1.
	\]
	It follows that
	\[
	2\eta\le 1,
	\]
	that is, $\eta\le \frac{1}{2}$, contradicting $\eta\in\Theta_V^{+}$. Therefore, there is at most one value satisfying the threshold condition, and the claim follows.
\end{proof}

\begin{proposition}[Monotonicity of knowing-value in the high-threshold interval]\label{prop:kvmon}
	If $\eta,\zeta\in\Theta_V^{+}$ and $\zeta\le \eta$, then
	\[
	\models Kv_i^\eta(t)\to Kv_i^\zeta(t).
	\]
\end{proposition}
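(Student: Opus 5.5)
The argument is a direct semantic one that runs through Proposition~\ref{prop:unique}, which lets us drop the uniqueness clause for both thresholds at once. Fix an arbitrary model $M$, world $w$, agent $i$ and term $t$, and assume $M,w\models Kv_i^\eta(t)$ with $\eta\in\Theta_V^{+}$. We must show $M,w\models Kv_i^\zeta(t)$.

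First, since $\eta\in\Theta_V^{+}$, Proposition~\ref{prop:unique} converts the hypothesis into the plain existential form. There is some $d\in D$ with
\[
P_i(w)\bigl(\llbracket t=d\rrbracket^M\bigr)\ge \eta .
\]
Second, because $\zeta\le\eta$, the same $d$ satisfies $P_i(w)\bigl(\llbracket t=d\rrbracket^M\bigr)\ge \zeta$. Third, since $\zeta\in\Theta_V^{+}$ as well, we apply Proposition~\ref{prop:unique} in the converse direction at threshold $\zeta$. The existence of such a $d$ is then equivalent to $M,w\models Kv_i^\zeta(t)$. As $M$ and $w$ were arbitrary, the implication $Kv_i^\eta(t)\to Kv_i^\zeta(t)$ is valid.

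The only point needing care is the third step. Lowering the threshold could in principle let a second value $d'$ reach $\zeta$, which would break the $\exists!$ clause. This is exactly where the hypothesis $\zeta>\frac12$ is used: Proposition~\ref{prop:unique} excludes such a $d'$ by disjointness of the value classes $\llbracket t=d\rrbracket^M$ and $\llbracket t=d'\rrbracket^M$. The argument would fail for $\zeta\le\frac12$, and this is the structural reason the knowing-value thresholds are restricted to $\Theta_V^{+}$.
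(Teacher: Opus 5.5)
Your proof is correct and follows essentially the same route as the paper: you apply Proposition~\ref{prop:unique} to extract a witness $d$ with probability at least $\eta\ge\zeta$, and apply it again at $\zeta>\frac{1}{2}$ to recover the $\exists!$ clause. Your remark that the second application is exactly where $\zeta>\frac{1}{2}$ is needed is accurate, and it is a useful addition that the paper leaves implicit.
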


\begin{proof}
	Assume $M,w\models Kv_i^\eta(t)$. By Proposition~\ref{prop:unique}, there exists some $d\in D$ such that
	\[
	P_i(w)\bigl(\llbracket t=d\rrbracket^M\bigr)\ge \eta.
	\]
	Since $\zeta\le \eta$, we obtain
	\[
	P_i(w)\bigl(\llbracket t=d\rrbracket^M\bigr)\ge \zeta.
	\]
	Because $\zeta>\frac{1}{2}$, Proposition~\ref{prop:unique} again implies
	\[
	M,w\models Kv_i^\zeta(t).
	\]
	Hence the formula is valid.
\end{proof}

\subsection{Basic Properties of Propositional Probabilistic Operators}
\label{subsec:propositional-operators}

\begin{proposition}[Threshold monotonicity]\label{prop:kmon-sem}
	If $\theta\le \theta'$, then
	\[
	\models K_i^{\theta'}\varphi\to K_i^\theta\varphi.
	\]
\end{proposition}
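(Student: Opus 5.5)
The claim follows directly from the satisfaction clause for $K_i^\theta$ together with the transitivity of $\le$ on the reals. No measure-theoretic tool beyond the definition is needed. In particular, Lemma~\ref{lem:mono} is not required, because the event $\llbracket\varphi\rrbracket^M$ is the same on both sides of the implication; only the threshold changes.

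Fix an arbitrary model $M=(W,D,\{P_i\}_{i\in\mathcal{A}},V,\mathsf{val})$ and a world $w\in W$, and assume $M,w\models K_i^{\theta'}\varphi$. By the satisfaction clause,
\[
P_i(w)\bigl(\llbracket\varphi\rrbracket^M\bigr)\ge\theta'.
\]
Since $\theta\le\theta'$ by hypothesis, transitivity gives
\[
P_i(w)\bigl(\llbracket\varphi\rrbracket^M\bigr)\ge\theta,
\]
and the satisfaction clause read in the other direction yields $M,w\models K_i^\theta\varphi$. The clause for $\to$ then gives $M,w\models K_i^{\theta'}\varphi\to K_i^\theta\varphi$. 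Because $M$ and $w$ were arbitrary, the formula is valid.

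The only point requiring care is well-formedness: both $\theta$ and $\theta'$ must lie in $\Theta_K$ so that both formulas belong to $\mathcal{L}_{\mathrm{PTMLKv}}$. The statement tacitly assumes this, and I would state it explicitly. Unlike Proposition~\ref{prop:kvmon}, no appeal to uniqueness (Proposition~\ref{prop:unique}) is needed, since $K_i^\theta$ has no $\exists!$ component. I therefore expect no real obstacle in this proof.
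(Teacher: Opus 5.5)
Your proof is correct and is the same as the paper's: the satisfaction clause gives $P_i(w)\bigl(\llbracket\varphi\rrbracket^M\bigr)\ge\theta'\ge\theta$, so $M,w\models K_i^\theta\varphi$, with no further measure-theoretic lemma needed. Your remark that $\theta,\theta'\in\Theta_K$ is tacitly required for well-formedness is a reasonable clarification.
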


\begin{proof}
	Assume $M,w\models K_i^{\theta'}\varphi$. Then
	\[
	P_i(w)\bigl(\llbracket\varphi\rrbracket^M\bigr)\ge \theta'\ge \theta.
	\]
	Hence $M,w\models K_i^\theta\varphi$.
\end{proof}

\begin{proposition}[Threshold implication propagation]\label{prop:kimp-sem}
	For any $\alpha,\beta\in\Theta_K$,
	\[
	\models
	K_i^\alpha(\varphi\to\psi)\to
	\bigl(K_i^\beta\varphi\to K_i^{\max\{0,\alpha+\beta-1\}}\psi\bigr).
	\]
\end{proposition}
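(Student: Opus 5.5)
We argue semantically at an arbitrary pointed model, mirroring the proof of Proposition~\ref{prop:kmon-sem}. Fix $M$ and $w$, write $P:=P_i(w)$, and set $A:=\llbracket\varphi\rrbracket^M$, $B:=\llbracket\psi\rrbracket^M$. By the satisfaction clause for implication,
\[
\llbracket\varphi\to\psi\rrbracket^M=(W\setminus A)\cup B .
\]
Assume $M,w\models K_i^\alpha(\varphi\to\psi)$ and $M,w\models K_i^\beta\varphi$. This gives $P((W\setminus A)\cup B)\ge\alpha$ and $P(A)\ge\beta$. The goal is $P(B)\ge\max\{0,\alpha+\beta-1\}$.

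The key step is a Bonferroni-type lower bound on an intersection. First, $A\cap\bigl((W\setminus A)\cup B\bigr)=A\cap B\subseteq B$, so Lemma~\ref{lem:mono} yields $P(B)\ge P(A\cap B)$. Second, for any events $X,Y$,
\[
P(X\cap Y)\ge P(X)+P(Y)-1 .
\]
To derive this, write $X\cup Y$ as the disjoint union $X\cup(Y\setminus X)$ and $Y$ as the disjoint union $(X\cap Y)\cup(Y\setminus X)$. Additivity then gives $P(X\cup Y)=P(X)+P(Y)-P(X\cap Y)$. Combining this with $P(X\cup Y)\le P(W)=1$, which holds by Lemma~\ref{lem:mono}, gives the inequality.

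Taking $X=A$ and $Y=(W\setminus A)\cup B$, we obtain
\[
P(B)\ge P(A\cap B)\ge \beta+\alpha-1 .
\]
Since $P(B)\ge0$ trivially, it follows that $P(B)\ge\max\{0,\alpha+\beta-1\}$. Hence $M,w\models K_i^{\max\{0,\alpha+\beta-1\}}\psi$ by the satisfaction clause for $K_i^\theta$. Because $M$ and $w$ were arbitrary, the formula is valid.

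We must also check that the threshold $\max\{0,\alpha+\beta-1\}$ lies in $\Theta_K$, so that the consequent is a well-formed formula. It is rational because $\alpha,\beta$ are rational. It is at most $1$ because $\alpha,\beta\le1$, and it is at least $0$ by the $\max$.

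No step is a serious obstacle. The only care needed is in the inclusion--exclusion step, which must be derived from finite additivity rather than assumed. Finite additivity follows from countable additivity by padding the family with empty sets, using $P(\emptyset)=0$.
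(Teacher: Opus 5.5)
Your proof is correct and follows essentially the same route as the paper: both use the inclusion $\llbracket\varphi\rrbracket^M\cap\llbracket\varphi\to\psi\rrbracket^M\subseteq\llbracket\psi\rrbracket^M$ and the Bonferroni bound $P(X\cap Z)\ge P(X)+P(Z)-1$, derived from additivity and $P(X\cup Z)\le 1$, then take the maximum with $0$. The only differences are cosmetic: you split $X\cup Y$ as $X\cup(Y\setminus X)$ rather than into three pieces, and you additionally check that $\max\{0,\alpha+\beta-1\}\in\Theta_K$, a detail the paper leaves implicit.
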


\begin{proof}
	Assume
	\[
	M,w\models K_i^\alpha(\varphi\to\psi)
	\quad\text{and}\quad
	M,w\models K_i^\beta\varphi.
	\]
	Let
	\[
	X=\llbracket\varphi\rrbracket^M,\qquad
	Y=\llbracket\psi\rrbracket^M,\qquad
	Z=\llbracket\varphi\to\psi\rrbracket^M.
	\]
	Then
	\[
	X\cap Z\subseteq Y.
	\]
	By Lemma~\ref{lem:mono},
	\[
	P_i(w)(Y)\ge P_i(w)(X\cap Z).
	\]
	
	On the other hand, since
	\[
	X\cup Z=(X\setminus Z)\cup(X\cap Z)\cup(Z\setminus X),
	\]
	and the three sets on the right-hand side are pairwise disjoint, additivity gives
	\[
	P_i(w)(X\cup Z)=P_i(w)(X)+P_i(w)(Z)-P_i(w)(X\cap Z).
	\]
	Because $X\cup Z\subseteq W$, Lemma~\ref{lem:mono} yields
	\[
	P_i(w)(X\cup Z)\le 1.
	\]
	Therefore,
	\[
	P_i(w)(X\cap Z)\ge P_i(w)(X)+P_i(w)(Z)-1.
	\]
	
	By assumption,
	\[
	P_i(w)(X)\ge \beta,\qquad P_i(w)(Z)\ge \alpha.
	\]
	Hence
	\[
	P_i(w)(Y)\ge P_i(w)(X\cap Z)\ge \alpha+\beta-1.
	\]
	Since probability measures are nonnegative,
	\[
	P_i(w)(Y)\ge \max\{0,\alpha+\beta-1\}.
	\]
	Thus
	\[
	M,w\models K_i^{\max\{0,\alpha+\beta-1\}}\psi.
	\]
\end{proof}

\begin{proposition}[Complementary exclusion]\label{prop:kexcl-sem}
	If $\alpha+\beta>1$, then
	\[
	\models K_i^\alpha\varphi \to \neg K_i^\beta\neg\varphi.
	\]
\end{proposition}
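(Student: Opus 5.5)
The plan is a direct semantic argument by contradiction, in the style of Propositions~\ref{prop:unique} and~\ref{prop:kimp-sem}. Fix a model $M$, a world $w$ and an agent $i$, and assume $M,w\models K_i^\alpha\varphi$. To show $M,w\models\neg K_i^\beta\neg\varphi$, suppose toward a contradiction that also $M,w\models K_i^\beta\neg\varphi$. By the satisfaction clause for $K_i^\theta$ this gives
\[
P_i(w)\bigl(\llbracket\varphi\rrbracket^M\bigr)\ge\alpha,
\qquad
P_i(w)\bigl(\llbracket\neg\varphi\rrbracket^M\bigr)\ge\beta.
\]

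The key observation is that, by the clause for negation, $\llbracket\neg\varphi\rrbracket^M=W\setminus\llbracket\varphi\rrbracket^M$. So the two events are disjoint and their union is $W$. Finite additivity, which is a special case of countable additivity obtained by padding with empty sets, then gives
\[
P_i(w)\bigl(\llbracket\varphi\rrbracket^M\bigr)+P_i(w)\bigl(\llbracket\neg\varphi\rrbracket^M\bigr)=P_i(w)(W)=1.
\]
Hence $\alpha+\beta\le 1$, which contradicts the hypothesis $\alpha+\beta>1$. Therefore $M,w\not\models K_i^\beta\neg\varphi$. Since $M$ and $w$ were arbitrary, the implication is valid.

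There is no real obstacle here. The only point needing a little care is justifying the complement identity and finite additivity from the model definition. The complement identity is immediate from the recursive definition of $\neg$. Finite additivity follows from countable additivity, or equivalently from the argument already used in Lemma~\ref{lem:mono}. Together these give the exact equation rather than just an inequality, although the inequality $\le 1$ would already suffice.
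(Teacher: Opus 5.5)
Your proposal is correct and is essentially the paper's own proof. The paper likewise assumes both $K_i^\alpha\varphi$ and $K_i^\beta\neg\varphi$, uses $\llbracket\neg\varphi\rrbracket^M=W\setminus\llbracket\varphi\rrbracket^M$ together with additivity to get $1=P_i(w)(\llbracket\varphi\rrbracket^M)+P_i(w)(\llbracket\neg\varphi\rrbracket^M)\ge\alpha+\beta>1$, and concludes by contradiction.
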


\begin{proof}
	Assume, for contradiction, that there exist a model $M$ and a world $w$ such that
	\[
	M,w\models K_i^\alpha\varphi
	\quad\text{and}\quad
	M,w\models K_i^\beta\neg\varphi.
	\]
	Then
	\[
	P_i(w)\bigl(\llbracket\varphi\rrbracket^M\bigr)\ge \alpha,
	\qquad
	P_i(w)\bigl(\llbracket\neg\varphi\rrbracket^M\bigr)\ge \beta.
	\]
	Moreover,
	\[
	\llbracket\neg\varphi\rrbracket^M=W\setminus \llbracket\varphi\rrbracket^M,
	\]
	and
	\[
	\llbracket\varphi\rrbracket^M\cap \llbracket\neg\varphi\rrbracket^M=\emptyset,
	\qquad
	\llbracket\varphi\rrbracket^M\cup \llbracket\neg\varphi\rrbracket^M=W.
	\]
	Hence
	\[
	1=P_i(w)(W)
	=
	P_i(w)\bigl(\llbracket\varphi\rrbracket^M\bigr)
	+
	P_i(w)\bigl(\llbracket\neg\varphi\rrbracket^M\bigr)
	\ge \alpha+\beta>1,
	\]
	a contradiction. Therefore the proposition holds.
\end{proof}

\begin{proposition}[Additivity over mutually exclusive events]\label{prop:kadd-sem}
	If
	\[
	\models \varphi\to\neg\psi
	\quad\text{and}\quad
	\alpha+\beta\le 1,
	\]
	then
	\[
	\models
	K_i^\alpha\varphi \land K_i^\beta\psi
	\to K_i^{\alpha+\beta}(\varphi\vee\psi).
	\]
	Moreover, for any model $M$ and world $w$, if
	\[
	M,w\models K_i^1\neg(\varphi\land\psi),\qquad
	M,w\models K_i^\alpha\varphi,\qquad
	M,w\models K_i^\beta\psi,
	\]
	and $\alpha+\beta\le 1$, then
	\[
	M,w\models K_i^{\alpha+\beta}(\varphi\vee\psi).
	\]
\end{proposition}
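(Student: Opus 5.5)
The plan is to prove the second (model-local) claim first and then derive the first claim from it. Fix $M$ and $w$, and put
\[
X=\llbracket\varphi\rrbracket^M,\qquad Y=\llbracket\psi\rrbracket^M,\qquad N=X\cap Y=\llbracket\varphi\land\psi\rrbracket^M .
\]
The hypothesis $M,w\models K_i^1\neg(\varphi\land\psi)$ gives $P_i(w)(W\setminus N)\ge 1$. Additivity on the partition $\{N,\,W\setminus N\}$ of $W$ then yields $P_i(w)(N)=0$. Thus the overlap of the two events is a $P_i(w)$-null set, and this is the only role the hypothesis plays.

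Next I would compute the probability of the union. Write $X\cup Y$ as the disjoint union $X\cup (Y\setminus X)$. Then $Y$ and $Y\setminus X$ differ exactly by $N$, so $Y\triangle(Y\setminus X)=N$. Lemma~\ref{lem:null} gives $P_i(w)(Y\setminus X)=P_i(w)(Y)$. By additivity,
\[
P_i(w)(X\cup Y)=P_i(w)(X)+P_i(w)(Y)\ge\alpha+\beta .
\]
Since $\llbracket\varphi\vee\psi\rrbracket^M=X\cup Y$, this gives $M,w\models K_i^{\alpha+\beta}(\varphi\vee\psi)$.

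The condition $\alpha+\beta\le 1$ is needed only so that $\alpha+\beta\in\Theta_K$, which makes the formula well-formed; it is rational because $\alpha$ and $\beta$ are.

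For the first claim, assume $\models\varphi\to\neg\psi$. Then in every model $\llbracket\varphi\rrbracket^M\cap\llbracket\psi\rrbracket^M=\emptyset$. Two routes are available:
\begin{itemize}
\item apply additivity directly to the disjoint sets $X$ and $Y$, with no null-set argument needed;
\item observe that $\llbracket\neg(\varphi\land\psi)\rrbracket^M=W$, so $K_i^1\neg(\varphi\land\psi)$ holds at every world, and invoke the second claim.
\end{itemize}
I would present the second route so that the two parts share one computation, and mention the first as the trivial special case.

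There is no real obstacle here. The only point needing care is that the first claim's hypothesis is validity across all models, which is used pointwise to get $X\cap Y=\emptyset$. The second claim's hypothesis only gives measure-zero overlap, which is why Lemma~\ref{lem:null} is needed rather than plain disjointness.
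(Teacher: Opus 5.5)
Your proposal is correct and follows essentially the paper's proof. The paper likewise extracts $P_i(w)(N)=0$ from $K_i^1\neg(\varphi\land\psi)$ and uses Lemma~\ref{lem:null} to discard the null overlap; it trims $X$ to $X\setminus Y$ where you trim $Y$ to $Y\setminus X$, an immaterial symmetric choice. For the first claim the paper uses your ``direct'' route of plain additivity on the disjoint extensions, and your alternative of deriving it from the second claim via $\llbracket\neg(\varphi\land\psi)\rrbracket^M=W$ is equally valid.
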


\begin{proof}
	We first prove the first part. Since $\models \varphi\to\neg\psi$, it follows that in every model,
	\[
	\llbracket\varphi\rrbracket^M \cap \llbracket\psi\rrbracket^M=\emptyset.
	\]
	If
	\[
	M,w\models K_i^\alpha\varphi
	\quad\text{and}\quad
	M,w\models K_i^\beta\psi,
	\]
	then
	\[
	P_i(w)\bigl(\llbracket\varphi\rrbracket^M\bigr)\ge \alpha,
	\qquad
	P_i(w)\bigl(\llbracket\psi\rrbracket^M\bigr)\ge \beta.
	\]
	By additivity,
	\[
	P_i(w)\bigl(\llbracket\varphi\vee\psi\rrbracket^M\bigr)
	=
	P_i(w)\bigl(\llbracket\varphi\rrbracket^M\bigr)
	+
	P_i(w)\bigl(\llbracket\psi\rrbracket^M\bigr)
	\ge \alpha+\beta.
	\]
	Therefore,
	\[
	M,w\models K_i^{\alpha+\beta}(\varphi\vee\psi).
	\]
	
	We next prove the second part. Let
	\[
	X=\llbracket\varphi\rrbracket^M,\qquad
	Y=\llbracket\psi\rrbracket^M,\qquad
	N=X\cap Y=\llbracket\varphi\land\psi\rrbracket^M.
	\]
	From
	\[
	M,w\models K_i^1\neg(\varphi\land\psi)
	\]
	we obtain
	\[
	P_i(w)(N)=0.
	\]
	Define
	\[
	X':=X\setminus Y.
	\]
	Then $X'\cap Y=\emptyset$, and
	\[
	X\triangle X' = X\cap Y = N.
	\]
	By Lemma~\ref{lem:null},
	\[
	P_i(w)(X')=P_i(w)(X)\ge \alpha.
	\]
	By assumption,
	\[
	P_i(w)(Y)\ge \beta.
	\]
	Since $X'$ and $Y$ are disjoint,
	\[
	P_i(w)(X\cup Y)=P_i(w)(X'\cup Y)=P_i(w)(X')+P_i(w)(Y)\ge \alpha+\beta.
	\]
	Noting that $X\cup Y=\llbracket\varphi\vee\psi\rrbracket^M$, we conclude that
	\[
	M,w\models K_i^{\alpha+\beta}(\varphi\vee\psi).
	\]
\end{proof}

\begin{proposition}[Substitution of equals under probability $1$]\label{prop:cong-sem}
	For any $\theta\in\Theta_K$ and $\eta\in\Theta_V^+$, we have:
	\[
	\models K_i^1(t=s)\to\bigl(K_i^\theta(t=u)\leftrightarrow K_i^\theta(s=u)\bigr),
	\]
	\[
	\models K_i^1(t=s)\to\bigl(Kv_i^\eta(t)\leftrightarrow Kv_i^\eta(s)\bigr).
	\]
\end{proposition}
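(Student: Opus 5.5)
The plan is to reduce both equivalences to the null-set lemma. Fix a model $M$ and a world $w$ with $M,w\models K_i^1(t=s)$, and write $P=P_i(w)$ and $E=\llbracket t=s\rrbracket^M$. Then $P(E)\ge 1$, so $P(E)=1$. For the complement $N:=W\setminus E$, additivity gives $P(N)=0$. Every comparison of events below will be shown to differ only on $N$, after which Lemma~\ref{lem:null} equates their measures.

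For the first equivalence, let $X=\llbracket t=u\rrbracket^M$ and $Y=\llbracket s=u\rrbracket^M$. If $v\in E$, then $\mathsf{val}(v,t)=\mathsf{val}(v,s)$, so $v\in X$ iff $v\in Y$. Hence $X\triangle Y\subseteq N$, and Lemma~\ref{lem:null} yields $P(X)=P(Y)$. Therefore $P(X)\ge\theta$ iff $P(Y)\ge\theta$, which is exactly $M,w\models K_i^\theta(t=u)\leftrightarrow K_i^\theta(s=u)$.

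For the second equivalence, apply the same argument to the meta-language value events. For each $d\in D$, the events $\llbracket t=d\rrbracket^M$ and $\llbracket s=d\rrbracket^M$ agree on every world of $E$, so their symmetric difference lies in $N$. Lemma~\ref{lem:null} then gives $P(\llbracket t=d\rrbracket^M)=P(\llbracket s=d\rrbracket^M)$ for every $d$. Since $\eta\in\Theta_V^+$, Proposition~\ref{prop:unique} lets us read $Kv_i^\eta(t)$ as ``there exists $d$ with $P(\llbracket t=d\rrbracket^M)\ge\eta$''. With the measures equal pointwise in $d$, this existential condition holds for $t$ iff it holds for $s$. The $\exists!$ clause would in fact transfer even without Proposition~\ref{prop:unique}, because the whole function $d\mapsto P(\llbracket t=d\rrbracket^M)$ coincides with $d\mapsto P(\llbracket s=d\rrbracket^M)$.

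No step is a real obstacle. The only point needing care is to keep the object-language atom $t=u$, whose truth set is a world set, distinct from the meta-language event $\llbracket t=d\rrbracket^M$. The symmetric-difference inclusion must be checked separately for each kind, but in both cases it follows by unfolding $\mathsf{val}$ on worlds in $E$.
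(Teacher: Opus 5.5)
Your proposal is correct and follows the paper's proof essentially step for step: you use the null set $N=W\setminus\llbracket t=s\rrbracket^M$, show the symmetric-difference inclusions, and apply Lemma~\ref{lem:null} to equate the measures of $\llbracket t=u\rrbracket^M$ and $\llbracket s=u\rrbracket^M$ and of each pair $\llbracket t=d\rrbracket^M$, $\llbracket s=d\rrbracket^M$. The only nuance is that the paper closes the $Kv$ case with your second observation (the witness sets for $t$ and $s$ coincide, so the $\exists!$ clause transfers directly), so Proposition~\ref{prop:unique} is not actually needed there.
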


\begin{proof}
	We first prove the first equivalence. Assume
	\[
	M,w\models K_i^1(t=s).
	\]
	Then
	\[
	P_i(w)\bigl(\llbracket t=s\rrbracket^M\bigr)=1.
	\]
	Let
	\[
	N=W\setminus \llbracket t=s\rrbracket^M,
	\]
	so that $P_i(w)(N)=0$. Define
	\[
	X=\llbracket t=u\rrbracket^M,\qquad
	Y=\llbracket s=u\rrbracket^M.
	\]
	If $v\in X\triangle Y$, then the formulas $t=u$ and $s=u$ have different truth values at $v$, and hence
	\[
	\mathsf{val}(v,t)\neq \mathsf{val}(v,s),
	\]
	that is, $v\in N$. Therefore,
	\[
	X\triangle Y\subseteq N.
	\]
	By Lemma~\ref{lem:null},
	\[
	P_i(w)(X)=P_i(w)(Y).
	\]
	Hence
	\[
	P_i(w)(X)\ge \theta \iff P_i(w)(Y)\ge \theta,
	\]
	that is,
	\[
	M,w\models K_i^\theta(t=u)\leftrightarrow K_i^\theta(s=u).
	\]
	
	We next prove the second equivalence. Again assume
	\[
	M,w\models K_i^1(t=s).
	\]
	For each $d\in D$, let
	\[
	X_d=\llbracket t=d\rrbracket^M,\qquad
	Y_d=\llbracket s=d\rrbracket^M.
	\]
	If $v\in X_d\triangle Y_d$, then necessarily
	\[
	\mathsf{val}(v,t)\neq \mathsf{val}(v,s),
	\]
	and hence $v\in N$. Thus
	\[
	X_d\triangle Y_d\subseteq N.
	\]
	By Lemma~\ref{lem:null},
	\[
	P_i(w)(X_d)=P_i(w)(Y_d)
	\qquad\text{for all } d\in D.
	\]
	Therefore, for every $d\in D$,
	\[
	P_i(w)\bigl(\llbracket t=d\rrbracket^M\bigr)\ge \eta
	\iff
	P_i(w)\bigl(\llbracket s=d\rrbracket^M\bigr)\ge \eta.
	\]
	In other words, the set of witness values satisfying the threshold condition is exactly the same for $t$ and $s$. By the semantic definition of $Kv_i^\eta$,
	\[
	M,w\models Kv_i^\eta(t)\iff M,w\models Kv_i^\eta(s).
	\]
	This completes the proof.
\end{proof}

\begin{proposition}[General substitution under probability $1$]\label{prop:kcong1-sem}
	For any $\theta\in\Theta_K$,
	\[
	\models K_i^1(\varphi\leftrightarrow\psi)\to
	\bigl(K_i^\theta\varphi\leftrightarrow K_i^\theta\psi\bigr).
	\]
\end{proposition}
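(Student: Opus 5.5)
The plan follows the pattern of the first part of Proposition~\ref{prop:cong-sem}, reducing the claim to Lemma~\ref{lem:null}. Fix a model $M$ and a world $w$, and assume $M,w\models K_i^1(\varphi\leftrightarrow\psi)$. By the satisfaction clause for $K_i^\theta$, this means $P_i(w)\bigl(\llbracket\varphi\leftrightarrow\psi\rrbracket^M\bigr)\ge 1$, hence equal to $1$. I would set
\[
N := W\setminus \llbracket\varphi\leftrightarrow\psi\rrbracket^M .
\]
Additivity on the partition $\{\llbracket\varphi\leftrightarrow\psi\rrbracket^M, N\}$ of $W$, together with $P_i(w)(W)=1$, then gives $P_i(w)(N)=0$.

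Next, let $X=\llbracket\varphi\rrbracket^M$ and $Y=\llbracket\psi\rrbracket^M$. I will show $X\triangle Y\subseteq N$. Take $v\in X\triangle Y$. Then exactly one of $\varphi,\psi$ holds at $v$. Since $\leftrightarrow$ is the standard abbreviation built from $\to$ and $\neg$, unfolding the satisfaction clauses shows $M,v\not\models\varphi\leftrightarrow\psi$, so $v\in N$. Lemma~\ref{lem:null} now yields $P_i(w)(X)=P_i(w)(Y)$. Consequently, for every $\theta\in\Theta_K$ we have $P_i(w)(X)\ge\theta$ if and only if $P_i(w)(Y)\ge\theta$. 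That is exactly $M,w\models K_i^\theta\varphi\leftrightarrow K_i^\theta\psi$. Since $M$ and $w$ were arbitrary, the formula is valid.

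There is no real obstacle here. The only point needing care is that $\llbracket\varphi\leftrightarrow\psi\rrbracket^M$ is exactly the set of worlds where $\varphi$ and $\psi$ agree, which is a routine check from the Boolean clauses. Measurability causes no trouble, because $P_i(w)$ is defined on the whole powerset.
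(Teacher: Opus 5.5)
Your proposal is correct and follows the paper's proof essentially step for step: you take $N=W\setminus\llbracket\varphi\leftrightarrow\psi\rrbracket^M$, note $P_i(w)(N)=0$, check $\llbracket\varphi\rrbracket^M\triangle\llbracket\psi\rrbracket^M\subseteq N$, and apply Lemma~\ref{lem:null}. You also justify $P_i(w)(N)=0$ explicitly via additivity on the two-cell partition, a detail the paper leaves implicit.
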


\begin{proof}
	Assume
	\[
	M,w\models K_i^1(\varphi\leftrightarrow\psi).
	\]
	Then
	\[
	P_i(w)\bigl(\llbracket\varphi\leftrightarrow\psi\rrbracket^M\bigr)=1.
	\]
	Let
	\[
	N=W\setminus \llbracket\varphi\leftrightarrow\psi\rrbracket^M,
	\]
	so that $P_i(w)(N)=0$. Since
	\[
	\llbracket\varphi\rrbracket^M\triangle \llbracket\psi\rrbracket^M\subseteq N,
	\]
	Lemma~\ref{lem:null} yields
	\[
	P_i(w)\bigl(\llbracket\varphi\rrbracket^M\bigr)
	=
	P_i(w)\bigl(\llbracket\psi\rrbracket^M\bigr).
	\]
	Therefore,
	\[
	P_i(w)\bigl(\llbracket\varphi\rrbracket^M\bigr)\ge\theta
	\iff
	P_i(w)\bigl(\llbracket\psi\rrbracket^M\bigr)\ge\theta,
	\]
	that is,
	\[
	M,w\models K_i^\theta\varphi\leftrightarrow K_i^\theta\psi.
	\]
\end{proof}

\section{Axiomatic Systems}
\label{sec:axioms}

\subsection{The Basic System $\mathbf{PTKv}^{-}$}
\label{subsec:basic-system}

Based on the probabilistic-threshold semantics introduced above, we first define a basic axiomatic system, denoted by $\mathbf{PTKv}^{-}$. Its purpose is to capture, without introducing additional one-step consistency rules, the most stable and most immediate semantically valid principles of the dual-threshold language. From a proof-theoretic point of view, the system is designed to reflect the basic semantic validities established in the previous section, without presupposing stronger principles characteristic of classical factive knowledge or additional local satisfiability rules \cite{Fagin1994Reason,Halpern1991Certain}.

\begin{definition}[Axiom schemata and inference rules of $\mathbf{PTKv}^{-}$]
	The inference rules of $\mathbf{PTKv}^{-}$ are the following:
	
	\[
	\frac{\varphi \qquad \varphi\to\psi}{\psi}
	\quad(\mathrm{MP})
	\]
	
	\[
	\frac{\varphi}{K_i^\theta \varphi}
	\quad(\mathrm{Nec}_K)
	\qquad (i\in\mathcal{A},\ \theta\in\Theta_K)
	\]
	
	The axiom schemata are as follows:
	
	\begin{enumerate}
		\item[\textbf{(TAUT)}] All propositional tautologies.
		
		\item[\textbf{(EqRef)}]
		\[
		t=t.
		\]
		
		\item[\textbf{(EqSym)}]
		\[
		t=s \to s=t.
		\]
		
		\item[\textbf{(EqTrans)}]
		\[
		(t=s \land s=u)\to t=u.
		\]
		
		\item[\textbf{(EqSub)}]
		\[
		t=s \to \bigl((t=u)\leftrightarrow (s=u)\bigr).
		\]
		
		\item[\textbf{(KMon)}] If $\theta,\theta'\in\Theta_K$ and $\theta\le \theta'$, then
		\[
		K_i^{\theta'} \varphi \to K_i^\theta \varphi.
		\]
		
		\item[\textbf{(KImp)}] If $\alpha,\beta\in\Theta_K$, then
		\[
		K_i^\alpha(\varphi\to\psi)\to
		\bigl(K_i^\beta\varphi\to K_i^{\max\{0,\alpha+\beta-1\}}\psi\bigr).
		\]
		
		\item[\textbf{(KExcl)}] If $\alpha,\beta\in\Theta_K$ and $\alpha+\beta>1$, then
		\[
		K_i^\alpha\varphi \to \neg K_i^\beta\neg\varphi.
		\]
		
		\item[\textbf{(KZero)}]
		\[
		K_i^0\varphi.
		\]
		
		\item[\textbf{(KEqSub1)}] For every $\theta\in\Theta_K$,
		\[
		K_i^1(t=s)\to\bigl(K_i^\theta(t=u)\leftrightarrow K_i^\theta(s=u)\bigr).
		\]
		
		\item[\textbf{(KvEqSub1)}] For every $\eta\in\Theta_V^+$,
		\[
		K_i^1(t=s)\to\bigl(Kv_i^\eta(t)\leftrightarrow Kv_i^\eta(s)\bigr).
		\]
	\end{enumerate}
\end{definition}

\begin{remark}
	Unlike the classical $\mathrm{MLKv}$ system, both $K_i^\theta$ and $Kv_i^\eta(t)$ are interpreted here by probabilistic-threshold semantics rather than by factive epistemic semantics with positive and negative introspection. Accordingly, principles such as
	\[
	K_i^\theta\varphi\to\varphi,\qquad
	K_i^\theta\varphi\to K_i^\theta K_i^\theta\varphi,\qquad
	\neg K_i^\theta\varphi\to K_i^\theta\neg K_i^\theta\varphi
	\]
	as well as
	\[
	Kv_i^\eta(t)\to K_i^\theta Kv_i^\eta(t),\qquad
	\neg Kv_i^\eta(t)\to K_i^\theta\neg Kv_i^\eta(t)
	\]
	are not valid in general under the present semantics and are therefore not included in the basic system. This marks a fundamental difference between our framework and both classical factive epistemic logic and standard axiomatizations of knowing-value logic \cite{Fagin1995RAK,Gu2016KV}.
	
	Moreover, unrestricted substitution of equals in arbitrary contexts, familiar from classical logics with identity, does not hold unconditionally under the present probabilistic modal semantics. For this reason, we retain only the basic pointwise axioms for equality, namely \textbf{(EqRef)}--\textbf{(EqSub)}, together with the substitution-invariance principles \textbf{(KEqSub1)} and \textbf{(KvEqSub1)} that hold under probability $1$. This treatment is fully consistent with the semantic substitution results established in the previous section.
\end{remark}

\subsection{An Extended System for High-Threshold Weak Completeness}
\label{subsec:extended-system}

To establish weak completeness for the high-threshold fragment, the basic system $\mathbf{PTKv}^{-}$ must be enriched with several additional axioms supporting local probabilistic representation on type spaces and the threshold-truncation behavior of high-confidence knowing-value. We therefore introduce the following extended system. Methodologically, this step marks a transition from a direct axiomatization of semantic validity to a proof-theoretic characterization of local constraints tailored to model construction. In this respect, the overall strategy is also in line with recent approaches to one-step completeness and local satisfiability analysis \cite{Lin2023Coalgebraic,Ognjanovic2024Strong}.

\begin{definition}[The high-threshold extension $\mathbf{PTKv}^{+}$]
	The system $\mathbf{PTKv}^{+}$ is obtained by adding the following axiom schemata to the basic system $\mathbf{PTKv}^{-}$:
	
	\begin{enumerate}
		\item[\textbf{(KSub1)}] For every $\theta\in\Theta_K$,
		\[
		K_i^1(\varphi\leftrightarrow\psi)\to
		\bigl(K_i^\theta\varphi\leftrightarrow K_i^\theta\psi\bigr).
		\]
		
		\item[\textbf{(KAdd1)}] If $\alpha,\beta\in\Theta_K$ and $\alpha+\beta\le 1$, then
		\[
		K_i^\alpha\varphi \land K_i^\beta\psi \land K_i^1\neg(\varphi\land\psi)
		\to K_i^{\alpha+\beta}(\varphi\vee\psi).
		\]
		
		\item[\textbf{(KvMon)}] If $\eta,\zeta\in\Theta_V^+$ and $\zeta\le \eta$, then
		\[
		Kv_i^\eta(t)\to Kv_i^\zeta(t).
		\]
	\end{enumerate}
	
	The resulting system is denoted by $\mathbf{PTKv}^{+}$.
\end{definition}

\begin{remark}
	The difference between $\mathbf{PTKv}^{+}$ and the basic system $\mathbf{PTKv}^{-}$ is that the former incorporates, in addition to the basic validities of probabilistic-threshold semantics, three further principles that are indispensable for the completeness construction over the high-threshold fragment. These are: first, general substitution for equivalent events under probability $1$; second, an additive principle under almost-everywhere mutual exclusiveness at probability $1$; and third, the monotone truncation behavior of high-threshold knowing-value operators. These principles are not required for the basic system itself, but they become crucial in the subsequent completeness analysis. In particular, they provide the minimal additional proof-theoretic resources needed to pass from local probabilistic constraints to a structured model construction.
\end{remark}

\section{Soundness}
\label{sec:soundness}

In this section, we first prove that the basic system $\mathbf{PTKv}^{-}$ is sound with respect to the dual-threshold probabilistic semantics introduced above, and then show that the additional axiom schemata of the extended system $\mathbf{PTKv}^{+}$ are valid as well. The proof follows the standard strategy familiar from modal logic and probabilistic logic: one verifies the validity of each axiom schema and then checks that the inference rules preserve validity \cite{Fagin1990Prob,Fagin1994Reason,Ognjanovic2024Strong}.

\begin{theorem}[Soundness of the basic system]\label{thm:soundness}
	For every formula $\varphi$, if
	\[
	\vdash_{\mathbf{PTKv}^{-}} \varphi,
	\]
	then
	\[
	\models \varphi.
	\]
\end{theorem}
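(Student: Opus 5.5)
The plan is the standard induction on the length of derivations in $\mathbf{PTKv}^{-}$. First we show that every axiom instance is valid, i.e.\ true at every world of every model. Then we show that (MP) and $(\mathrm{Nec}_K)$ preserve validity. The theorem then follows, since the last formula of any derivation is valid.

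For the axioms, (TAUT) is immediate. The Boolean clauses of the satisfaction relation are classical, so any substitution instance of a tautology is true at every pointed model.

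The equality axioms (EqRef), (EqSym), (EqTrans) and (EqSub) are checked pointwise. At a fixed world $w$, $t=s$ holds iff $\mathsf{val}(w,t)=\mathsf{val}(w,s)$, so these axioms reduce to reflexivity, symmetry and transitivity of identity in $D$. No probabilistic content enters here.

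The probabilistic axioms are handled by the semantic results of Section~\ref{sec:properties}:
\begin{itemize}
\item (KMon) is Proposition~\ref{prop:kmon-sem};
\item (KImp) is Proposition~\ref{prop:kimp-sem};
\item (KExcl) is Proposition~\ref{prop:kexcl-sem};
\item (KEqSub1) and (KvEqSub1) are the two parts of Proposition~\ref{prop:cong-sem}.
\end{itemize}
(KZero) holds because $P_i(w)(\llbracket\varphi\rrbracket^M)\ge 0$ for any probability measure.

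Care is needed in one place. Those propositions quantify over arbitrary formulas and terms, and their truth sets $\llbracket\varphi\rrbracket^M$ are defined by the recursive satisfaction relation. Since every subset of $W$ is measurable (each $P_i(w)$ is defined on $\mathcal{P}(W)$), these sets are always in the domain of the measure. So the propositions apply to every schema instance without any measurability side condition.

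For the rules, (MP) preserves validity by the truth clause for $\to$. If $\varphi$ and $\varphi\to\psi$ hold at every world of $M$, then so does $\psi$.

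For $(\mathrm{Nec}_K)$, suppose $\models\varphi$. Then in any model $M$ we have $\llbracket\varphi\rrbracket^M=W$, so for every $w$ and $i$,
\[
P_i(w)(\llbracket\varphi\rrbracket^M)=P_i(w)(W)=1\ge\theta
\]
for every $\theta\in\Theta_K$. Hence $K_i^\theta\varphi$ is valid. The argument needs validity across all worlds of the model, not mere truth at a point. This is exactly what the induction hypothesis supplies, because the inductive claim is validity.

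I expect no genuine obstacle. The only step needing attention is the bookkeeping that the earlier propositions cover every instance of each schema, in particular the threshold side conditions. In (KImp), the index $\max\{0,\alpha+\beta-1\}$ lies in $\Theta_K$, since rationals in $[0,1]$ are closed under this operation. In (KvEqSub1), $\eta$ is drawn from $\Theta_V^+$, matching the hypothesis of Proposition~\ref{prop:cong-sem}.
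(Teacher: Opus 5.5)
Your proposal is correct and follows the paper's proof essentially step for step. Both argue by induction on derivation length, check the equality axioms pointwise, and cite Propositions~\ref{prop:kmon-sem}, \ref{prop:kimp-sem}, \ref{prop:kexcl-sem} and~\ref{prop:cong-sem} for the probabilistic schemata, use nonnegativity of the measure for (KZero), and use $\llbracket\varphi\rrbracket^M=W$ for $(\mathrm{Nec}_K)$. Your added checks on measurability and on $\max\{0,\alpha+\beta-1\}\in\Theta_K$ are accurate details that the paper leaves implicit.
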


\begin{proof}
	We proceed by induction on the length of derivations. It suffices to show that:
	\begin{enumerate}
		\item every axiom schema of the system is semantically valid;
		\item the inference rules $\mathrm{MP}$ and $\mathrm{Nec}_K$ preserve validity.
	\end{enumerate}
	
	\medskip
	\noindent\textbf{(1) Propositional part: \textbf{(TAUT)}}  
	All propositional tautologies are true at every world of every model, and are therefore valid.
	
	\medskip
	\noindent\textbf{(2) Equality part: \textbf{(EqRef)}--\textbf{(EqSub)}}  
	
	\textbf{(EqRef)} $t=t$:  
	For every model $M$ and world $w$,
	\[
	\mathsf{val}(w,t)=\mathsf{val}(w,t),
	\]
	hence $M,w\models t=t$.
	
	\textbf{(EqSym)} $t=s\to s=t$:  
	If $M,w\models t=s$, then
	\[
	\mathsf{val}(w,t)=\mathsf{val}(w,s),
	\]
	and therefore
	\[
	\mathsf{val}(w,s)=\mathsf{val}(w,t),
	\]
	so $M,w\models s=t$.
	
	\textbf{(EqTrans)} $(t=s\land s=u)\to t=u$:  
	If $M,w\models t=s\land s=u$, then
	\[
	\mathsf{val}(w,t)=\mathsf{val}(w,s),\qquad
	\mathsf{val}(w,s)=\mathsf{val}(w,u).
	\]
	By transitivity of equality,
	\[
	\mathsf{val}(w,t)=\mathsf{val}(w,u),
	\]
	hence $M,w\models t=u$.
	
	\textbf{(EqSub)} $t=s\to((t=u)\leftrightarrow (s=u))$:  
	If $M,w\models t=s$, then
	\[
	\mathsf{val}(w,t)=\mathsf{val}(w,s).
	\]
	Therefore,
	\[
	\mathsf{val}(w,t)=\mathsf{val}(w,u)
	\iff
	\mathsf{val}(w,s)=\mathsf{val}(w,u),
	\]
	that is,
	\[
	M,w\models (t=u)\leftrightarrow (s=u).
	\]
	Hence the schema is valid.
	
	\medskip
	\noindent\textbf{(3) Threshold operators: \textbf{(KMon)}, \textbf{(KImp)}, \textbf{(KExcl)}, \textbf{(KZero)}}  
	
	The validity of \textbf{(KMon)} is exactly Proposition~\ref{prop:kmon-sem}.
	
	The validity of \textbf{(KImp)} is exactly Proposition~\ref{prop:kimp-sem}.
	
	The validity of \textbf{(KExcl)} is exactly Proposition~\ref{prop:kexcl-sem}.
	
	For \textbf{(KZero)}, for every model $M$, world $w$, and formula $\varphi$,
	\[
	P_i(w)\bigl(\llbracket\varphi\rrbracket^M\bigr)\ge 0,
	\]
	hence $M,w\models K_i^0\varphi$.
	
	\medskip
	\noindent\textbf{(4) Substitution invariance under probability $1$: \textbf{(KEqSub1)}, \textbf{(KvEqSub1)}}  
	
	The validity of these two axiom schemata is exactly Proposition~\ref{prop:cong-sem}.
	
	\medskip
	\noindent\textbf{(5) Preservation of validity by the inference rules}
	
	\textbf{(MP)}: If $\models \varphi$ and $\models \varphi\to\psi$, then for every model $M$ and world $w$, we have $M,w\models\varphi$ and $M,w\models\varphi\to\psi$, and hence $M,w\models\psi$. Therefore, $\models \psi$.
	
	\textbf{(Nec$_K$)}: If $\models \varphi$, then for every model $M$,
	\[
	\llbracket\varphi\rrbracket^M=W.
	\]
	Hence for every world $w$,
	\[
	P_i(w)\bigl(\llbracket\varphi\rrbracket^M\bigr)=P_i(w)(W)=1\ge \theta.
	\]
	Therefore $M,w\models K_i^\theta\varphi$, and thus
	\[
	\models K_i^\theta\varphi.
	\]
	
	It follows that all axiom schemata of $\mathbf{PTKv}^{-}$ are valid and that its inference rules preserve validity. Hence the system is sound.
\end{proof}

\begin{corollary}[Soundness of the extended system]\label{cor:soundness-plus}
	The system $\mathbf{PTKv}^{+}$ is also sound with respect to the present semantics.
\end{corollary}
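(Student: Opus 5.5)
The plan is to reuse the induction on derivation length from Theorem~\ref{thm:soundness}. The derivations of $\mathbf{PTKv}^{+}$ use the same inference rules (MP) and (Nec$_K$) as those of $\mathbf{PTKv}^{-}$, so preservation of validity by these rules is already settled in part (5) of that proof. The axioms of $\mathbf{PTKv}^{-}$ were shown valid there as well. It therefore suffices to check that the three new schemata \textbf{(KSub1)}, \textbf{(KAdd1)} and \textbf{(KvMon)} are valid. Once that is done, the same induction shows that every $\mathbf{PTKv}^{+}$-theorem is valid.

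For the individual axioms, each has a direct semantic counterpart in Section~\ref{sec:properties}:
\begin{itemize}
\item \textbf{(KSub1)} is exactly Proposition~\ref{prop:kcong1-sem}.
\item \textbf{(KvMon)} is exactly Proposition~\ref{prop:kvmon}. That proposition relies on the automatic uniqueness of Proposition~\ref{prop:unique}, which needs both thresholds $\eta,\zeta$ to lie in $\Theta_V^{+}$; this is guaranteed by the side condition of the schema.
\item \textbf{(KAdd1)} is handled by the second part of Proposition~\ref{prop:kadd-sem}. Fix any $M,w$ satisfying the antecedent $K_i^\alpha\varphi\land K_i^\beta\psi\land K_i^1\neg(\varphi\land\psi)$ with $\alpha+\beta\le 1$. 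Then that proposition yields $M,w\models K_i^{\alpha+\beta}(\varphi\vee\psi)$. The side condition $\alpha+\beta\le 1$ also ensures $\alpha+\beta\in\Theta_K$, so the consequent is a well-formed formula.
\end{itemize}

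The only point needing care is \textbf{(KAdd1)}, because Proposition~\ref{prop:kadd-sem} is stated pointwise rather than as a validity. One has to observe that its second part holds at an arbitrary pointed model, so the implication holds at every world of every model. The argument behind it removes the null set $\llbracket\varphi\land\psi\rrbracket^M$ via Lemma~\ref{lem:null} and then applies finite additivity. Everything else is a direct citation, so I expect no real obstacle.
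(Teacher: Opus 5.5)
Your proposal is correct and follows the paper's own proof. Both reduce the claim to the validity of the three new schemata: Proposition~\ref{prop:kcong1-sem} for \textbf{(KSub1)}, the second (pointwise, hence valid) part of Proposition~\ref{prop:kadd-sem} for \textbf{(KAdd1)}, and Proposition~\ref{prop:kvmon} for \textbf{(KvMon)}, with the rules handled exactly as in Theorem~\ref{thm:soundness}; your remarks on the well-formedness of $K_i^{\alpha+\beta}$ and on both thresholds lying in $\Theta_V^{+}$ make explicit details the paper leaves implicit.
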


\begin{proof}
	It suffices to verify the validity of the additional axiom schemata. The validity of \textbf{(KSub1)} is exactly Proposition~\ref{prop:kcong1-sem}; the validity of \textbf{(KAdd1)} follows directly from Proposition~\ref{prop:kadd-sem}; and the validity of \textbf{(KvMon)} is exactly Proposition~\ref{prop:kvmon}. Therefore the claim follows.
\end{proof}

\begin{remark}
	Theorem~\ref{thm:soundness} and Corollary~\ref{cor:soundness-plus} show that both the basic system $\mathbf{PTKv}^{-}$ and the extended system $\mathbf{PTKv}^{+}$ are sound with respect to the dual-threshold probabilistic semantics proposed in this paper. It should be emphasized that the main role of $\mathbf{PTKv}^{+}$ is to support the completeness construction for the high-threshold fragment developed in the next section. If completeness is not at issue, the basic system $\mathbf{PTKv}^{-}$ already suffices to capture the most direct and robust valid principles of the present semantics. In this sense, the soundness results both continue the standard tradition of axiomatization in probabilistic logic and provide the proof-theoretic starting point for the structured weak-completeness analysis that will subsequently involve additional local consistency rules \cite{Fagin1990Prob,Fagin1994Reason,Ognjanovic2024Strong}.
\end{remark}

\section{Structured Weak Completeness}
\label{sec:completeness}

In this section, we prove a structured weak-completeness theorem for the
dual-threshold language. Since the knowing-value operator is restricted,
already at the level of the language, to the high-threshold interval
$\Theta_V^{+}$, the fragment studied here is in fact the main language of the
paper. To overcome the coupling tension between probabilistic measures and
value-locking requirements, we draw on the idea of configuration spaces,
introduce one-step consistency conditions based on assignment configurations,
and construct models by a two-layer method combining \emph{type-space
	distributions} with \emph{value-fiber refinement}. This method not only
resolves value conflicts in multi-agent, multi-term settings, but also provides
a general logical framework for handling term-equality constraints under
probabilistic measures.

The proof proceeds in five stages:
\begin{enumerate}
	\item Define finite closures and the initial type space
	$\mathsf{Type}(\Sigma)$ relative to the base system
	$\mathbf{PTKv}^{+}$ (\S\ref{subsec:closure-types}).
	\item Introduce parameterized constraint systems and extract the refined
	type set $\mathsf{Type}^{\ast}(\Sigma)$ by iterative elimination
	(\S\ref{subsec:onestep}).
	\item Define the extended proof system $\mathbf{PTKv}^{\ast}(\Sigma)$
	proof-theoretically and establish a bridge lemma showing that
	$\mathbf{PTKv}^{\ast}(\Sigma)$-consistent maximal types coincide
	exactly with $\mathsf{Type}^{\ast}(\Sigma)$
	(\S\ref{subsec:bridge}).
	\item Construct the canonical model over
	$\mathsf{Type}^{\ast}(\Sigma)$ and prove the truth lemma by
	induction on modal depth (\S\ref{subsec:canonical-model},
	\S\ref{subsec:truth}).
	\item Derive weak completeness via a genuine Lindenbaum extension
	(\S\ref{subsec:weak-completeness}).
\end{enumerate}

\subsection{Finite Closures and Type Spaces}
\label{subsec:closure-types}

\begin{definition}[Finite closure]\label{def:finite-closure}
	Let $\chi \in \mathcal{L}_{\mathrm{PTMLKv}}$. A set of formulas $\Sigma$ is
	called a \emph{finite closure} of $\chi$ if it satisfies:
	\begin{enumerate}
		\item $\chi \in \Sigma$;
		\item $\Sigma$ is closed under subformulas and single negations: whenever
		$\varphi \in \Sigma$, every subformula of $\varphi$ belongs to
		$\Sigma$, and $\neg\varphi \in \Sigma$ unless $\varphi$ is itself a
		negation $\neg\psi$ with $\psi\in\Sigma$;
		\item $\Sigma$ is closed under term equalities: for all
		$t,s\in T_{\Sigma}$, the formula $t=s$ and its negation $\neg(t=s)$
		belong to $\Sigma$;
		\item all rational thresholds occurring in $\Sigma$ belong to a finite set
		$\mathbb{Q}_{\Sigma}\subset [0,1]\cap\mathbb{Q}$.
	\end{enumerate}
	Here $T_{\Sigma}$ denotes the finite set of all atomic terms occurring in
	$\Sigma$.
\end{definition}

\begin{remark}\label{rmk:closure-rationale}
	Condition~(3) is needed because the model construction assigns values from a
	finite domain $D_{\Sigma}$ indexed by $T_{\Sigma}$; checking the
	knowing-value semantics requires evaluating events of the form
	$\{w:\mathsf{val}(w,t)=d\}$, which in the canonical model correspond to
	collections of types containing specific equality formulas. Without
	condition~(3), the truth lemma could not be completed at the base case for
	all relevant equality formulas.
\end{remark}

\begin{definition}[Maximal consistent type]\label{def:type}
	Let $\Sigma$ be a finite closure. A set $\Gamma \subseteq \Sigma$ is called
	a \emph{maximal $\mathbf{PTKv}^{+}$-consistent type over $\Sigma$} if:
	\begin{enumerate}
		\item $\Gamma$ is consistent with respect to the system
		$\mathbf{PTKv}^{+}$, that is,
		$\mathbf{PTKv}^{+}\nvdash
		\neg\bigwedge_{\varphi\in\Gamma}\varphi$;
		\item $\Gamma$ is saturated over $\Sigma$: for every $\varphi\in\Sigma$,
		either $\varphi\in\Gamma$ or $\neg\varphi\in\Gamma$.
	\end{enumerate}
	We write $\mathsf{Type}(\Sigma)$ for the set of all such types. Since
	$\Sigma$ is finite, $\mathsf{Type}(\Sigma)$ is finite as well.
\end{definition}

\begin{remark}\label{rmk:type-base}
	Crucially, the type space $\mathsf{Type}(\Sigma)$ is defined relative to the
	base system $\mathbf{PTKv}^{+}$, which involves neither the constraint
	systems nor the iterative elimination introduced below. This avoids any
	circularity: $\mathsf{Type}(\Sigma)$ is well-defined before the refinement
	procedure begins, and serves as the initial pool from which the refined type
	set is extracted.
\end{remark}

\subsection{Parameterized Constraint Systems and Iterative Elimination}
\label{subsec:onestep}

To establish the compatibility of local probabilistic distributions with
high-confidence knowing-value requirements, we introduce parameterized
constraint systems whose variable spaces range over arbitrary subsets of
$\mathsf{Type}(\Sigma)$. The parameterization is essential: it allows the
constraint systems to be evaluated over the refined type set
$\mathsf{Type}^{\ast}(\Sigma)$, ensuring exact normalization in the canonical
model.

\medskip
We first record a standard auxiliary fact about mixed linear constraint
systems.

\begin{lemma}[Mixed-inequality feasibility]\label{lem:mixed-ineq}
	Let $A\in\mathbb{R}^{m\times n}$, $b\in\mathbb{R}^m$, and let
	$S\subseteq\{1,\dots,m\}$ be a subset of row indices designating strict
	inequalities. Consider the system
	\[
	(A x)_j \ge b_j\ (\text{for } j\notin S),
	\qquad
	(A x)_j > b_j\ (\text{for } j\in S),
	\qquad
	x\ge 0.
	\]
	If the closed relaxation obtained by replacing all strict inequalities by
	non-strict ones is feasible, and if no inequality indexed by $S$ is an
	implicit equality of the feasible polytope of the closed relaxation, then the
	original mixed system is feasible.
\end{lemma}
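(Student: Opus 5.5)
The plan is to use the standard polyhedral fact that every nonempty polyhedron has a point in its relative interior, and that a relative-interior point satisfies strictly every inequality that is not an implicit equality. Let
\[
P=\{x\in\mathbb{R}^n : (Ax)_j\ge b_j \text{ for all } j,\ x\ge 0\}
\]
be the feasible set of the closed relaxation, which is nonempty by hypothesis. Call a constraint of $P$ (a row $j$, or a sign constraint $x_k\ge 0$) an \emph{implicit equality} if it holds with equality at every point of $P$. The hypothesis says that for each $j\in S$ there is a point $x^{(j)}\in P$ with $(Ax^{(j)})_j>b_j$.

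The first option is a direct averaging construction, which avoids any appeal to relative-interior theory. If $S=\emptyset$, any point of $P$ already works. Otherwise, I would set
\[
\bar x=\frac{1}{|S|}\sum_{j\in S}x^{(j)}.
\]
Because $P$ is convex, $\bar x\in P$. In particular $\bar x\ge 0$, and every non-strict row holds at $\bar x$.

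Now fix $j\in S$. Every summand satisfies $(Ax^{(k)})_j\ge b_j$, and the summand $k=j$ satisfies this strictly. Linearity of $x\mapsto (Ax)_j$ therefore gives $(A\bar x)_j>b_j$. So $\bar x$ satisfies the original mixed system. The only ingredients are convexity of $P$, linearity of each row, and the fact that a convex combination with positive weights inherits strictness from any strictly satisfied term.

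The only place that needs care is reading ``implicit equality of the feasible polytope'' correctly. It must mean ``tight at every feasible point of the closed relaxation,'' not a tightness notion attached to a particular face or vertex. I would state this reading explicitly at the start of the proof. Under it, the negation is exactly the existence of the witnesses $x^{(j)}$ used above.

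The term ``polytope'' is harmless even if $P$ is unbounded, because boundedness is never used. The argument also does not require $A$ or $b$ to be rational. As a remark, I would note the equivalent formulation: any relative-interior point of $P$ satisfies strictly every non-implicit-equality inequality, so it solves the mixed system. This connects the lemma to its later use with rational data in the constraint systems of \S\ref{subsec:onestep}.
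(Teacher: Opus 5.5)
Your proof is correct, and it takes a slightly different, more elementary route than the paper.

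The paper picks an arbitrary point $x^1$ in the relative interior of the closed-relaxation polyhedron. It then cites, as a black box, the polyhedral fact that such a point satisfies strictly every inequality that is not an implicit equality. Its preliminary step with $x^0$ is redundant.

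You instead unpack the hypothesis as the existence of a witness $x^{(j)}\in P$ with $(Ax^{(j)})_j>b_j$ for each $j\in S$, and you average these witnesses. This uses only convexity of $P$ and linearity of each row. It is in fact the usual proof of the relative-interior fact the paper cites, restricted to the rows in $S$.

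The paper's version is shorter if standard polyhedral theory is taken for granted. Yours is self-contained and produces an explicit solution. It treats $S=\emptyset$ separately and makes clear that neither boundedness (despite the ``polytope'' wording) nor rationality of $A,b$ matters.

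Your decision to fix ``implicit equality'' as ``tight at every point of $P$'' is the right reading. It is exactly what guarantees the witnesses $x^{(j)}$ exist.
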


\begin{proof}
	Let $x^{0}$ be a feasible point of the closed relaxation. If $x^{0}$
	already satisfies all strict constraints, we are done. Otherwise, let
	$x^{1}$ be any point in the relative interior of the feasible polytope
	(which exists because the polytope is nonempty). By the relative-interior
	property, $x^{1}$ satisfies strictly every inequality that is not an implicit
	equality of the polytope. By hypothesis, no constraint indexed by $S$ is an
	implicit equality. Therefore $x^{1}$ satisfies all strict constraints as
	well.
\end{proof}

\begin{remark}\label{rmk:strictness}
	In the constraint systems introduced below, the strict inequalities arise
	from negative $K$-literals $\neg K_i^\theta\varphi$ and negative
	knowing-value literals $\neg Kv_i^\eta(t)$. The axiom \textbf{(KExcl)} and
	the maximal consistency of types together ensure that no strict constraint is
	an implicit equality of the feasible polytope of the closed relaxation.
	Concretely, if $\neg K_i^\theta\varphi\in\Gamma$, then by \textbf{(KExcl)}
	and the saturation of $\Gamma$, there is no collection of non-strict
	constraints within the system that forces
	$\sum_{\Delta\ni\varphi}x_{\Delta}=\theta$. A detailed verification is
	analogous to the standard argument in \cite[\S4]{Fagin1994Reason}.
\end{remark}

\begin{definition}[Assignment-configuration space]\label{def:config-space}
	Let the index set $K = \{1,2,\dots,|T_{\Sigma}|\}$ represent abstract
	valuation coordinates. For each type $\Delta \in \mathsf{Type}(\Sigma)$,
	define its set of \emph{consistent assignments} by
	\[
	\mathcal{F}(\Delta)
	=
	\{\, f:T_{\Sigma}\to K \mid
	\forall t,s\in T_{\Sigma},\
	f(t)=f(s)\iff (t=s)\in\Delta \,\}.
	\]
	Since $\Delta$ is a maximal $\mathbf{PTKv}^{+}$-consistent type, the
	equality formulas in $\Delta$ form an equivalence relation on $T_{\Sigma}$
	(by the axioms \textbf{(EqRef)}--\textbf{(EqTrans)}), and therefore
	$\mathcal{F}(\Delta)\neq\emptyset$.
\end{definition}

\begin{definition}[Parameterized joint constraint system
	$\mathsf{FC}(\Gamma,S,i)$]\label{def:FC}
	Let $S\subseteq\mathsf{Type}(\Sigma)$ be a nonempty set of types, let
	$i\in\mathcal{A}$ be an agent, and let $\Gamma\in S$ be a type. Let
	\[
	\Lambda_i^\Gamma
	:=
	\{\, K_i^\theta\varphi \in \Gamma \,\}
	\cup
	\{\, \neg K_i^\theta\varphi \in \Gamma \,\}
	\cup
	\{\, Kv_i^\eta(t)\in \Gamma \,\}
	\cup
	\{\, \neg Kv_i^\eta(t)\in \Gamma \,\}
	\]
	be the set of all $i$-modal literals occurring in $\Gamma$. Introduce
	variables
	$\vec{z}^{\,i}
	=\{z_{\Delta,f}^{i}\}_{\Delta\in S,\, f\in\mathcal{F}(\Delta)}$.
	Define the joint constraint system $\mathsf{FC}(\Gamma,S,i)$ by:
	\begin{enumerate}
		\item $z_{\Delta,f}^{i}\ge 0$ for all $\Delta\in S$ and
		$f\in\mathcal{F}(\Delta)$;
		\item normalization:
		$\displaystyle
		\sum_{\Delta\in S}
		\sum_{f\in\mathcal{F}(\Delta)}
		z_{\Delta,f}^{i}=1$;
		\item propositional probability requirements: if
		$K_i^\theta\varphi\in\Gamma$, then
		$\displaystyle
		\sum_{\substack{\Delta\in S\\\varphi\in\Delta}}
		\sum_{f\in\mathcal{F}(\Delta)} z_{\Delta,f}^{i}\ge \theta$,
		and if $\neg K_i^\theta\varphi\in\Gamma$, then
		$\displaystyle
		\sum_{\substack{\Delta\in S\\\varphi\in\Delta}}
		\sum_{f\in\mathcal{F}(\Delta)} z_{\Delta,f}^{i}< \theta$;
		\item high-confidence knowing-value locking: if
		$Kv_i^\eta(t)\in\Gamma$, then there exists $k\in K$ such that
		$\displaystyle
		\sum_{\Delta\in S}
		\sum_{\substack{f\in\mathcal{F}(\Delta)\\ f(t)=k}}
		z_{\Delta,f}^{i}\ge \eta$;
		\item non-knowing-value constraints: if
		$\neg Kv_i^\eta(t)\in\Gamma$, then for every $k\in K$,
		$\displaystyle
		\sum_{\Delta\in S}
		\sum_{\substack{f\in\mathcal{F}(\Delta)\\ f(t)=k}}
		z_{\Delta,f}^{i}< \eta$.
	\end{enumerate}
\end{definition}

\begin{remark}\label{rmk:index-set}
	The index set $K$ functions as a set of ``virtual names'' for potential
	values. By quantifying over assignment mappings $f$, we ensure that each type
	$\Delta$ can be realized by a concrete valuation pattern satisfying all
	term-equality constraints encoded in $\Delta$. This coordinatization prevents
	probabilistic interference among distinct terms and makes joint realization
	possible at the configuration level.
\end{remark}

\begin{definition}[Iterative elimination and refined type
	set]\label{def:iterative-elim}
	Define a decreasing sequence of type sets as follows.
	\begin{itemize}
		\item $\mathcal{T}_0 := \mathsf{Type}(\Sigma)$.
		\item For $\ell\ge 0$,
		\[
		\mathcal{T}_{\ell+1}
		:= \bigl\{\,\Delta\in\mathcal{T}_\ell
		\;\bigm|\;
		\text{for every agent } i\in\mathcal{A},\
		\mathsf{FC}(\Delta,\mathcal{T}_\ell,i)
		\text{ is feasible}
		\,\bigr\}.
		\]
	\end{itemize}
	Since $\mathcal{T}_0\supseteq\mathcal{T}_1\supseteq\cdots$ and
	$\mathcal{T}_0$ is finite, the sequence stabilizes at some finite stage
	$\ell_0$. We set
	\[
	\mathsf{Type}^{\ast}(\Sigma) := \mathcal{T}_{\ell_0}.
	\]
\end{definition}

\begin{lemma}[Fixed-point property]\label{lem:fixed-point}
	For every $\Gamma\in\mathsf{Type}^{\ast}(\Sigma)$ and every agent
	$i\in\mathcal{A}$, the constraint system
	$\mathsf{FC}(\Gamma,\mathsf{Type}^{\ast}(\Sigma),i)$ is feasible.
\end{lemma}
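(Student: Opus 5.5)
This is essentially a direct unwinding of the stabilization in Definition~\ref{def:iterative-elim}. Let $\ell_0$ be the stage at which the sequence stabilizes. By definition, $\mathcal{T}_{\ell_0+1}=\mathcal{T}_{\ell_0}=\mathsf{Type}^{\ast}(\Sigma)$.

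First I will justify that stabilization really occurs and that the fixed point is reached in the sense needed. The sequence $\mathcal{T}_0\supseteq\mathcal{T}_1\supseteq\cdots$ is a decreasing chain of subsets of the finite set $\mathsf{Type}(\Sigma)$. Hence there is a least $\ell_0$ with $\mathcal{T}_{\ell_0+1}=\mathcal{T}_{\ell_0}$. The passage from $\mathcal{T}_\ell$ to $\mathcal{T}_{\ell+1}$ depends only on $\mathcal{T}_\ell$, so by induction $\mathcal{T}_{\ell}=\mathcal{T}_{\ell_0}$ for all $\ell\ge\ell_0$. This is the reading of ``stabilizes'' I will fix, so that $\mathsf{Type}^{\ast}(\Sigma)$ is well defined.

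Next, take $\Gamma\in\mathsf{Type}^{\ast}(\Sigma)=\mathcal{T}_{\ell_0}=\mathcal{T}_{\ell_0+1}$ and an agent $i$. Membership in $\mathcal{T}_{\ell_0+1}$ means, by the defining clause, that $\mathsf{FC}(\Gamma,\mathcal{T}_{\ell_0},i)$ is feasible for every agent. Since $\mathcal{T}_{\ell_0}=\mathsf{Type}^{\ast}(\Sigma)$, this is exactly the feasibility of $\mathsf{FC}(\Gamma,\mathsf{Type}^{\ast}(\Sigma),i)$.

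One side condition must be checked: Definition~\ref{def:FC} requires the parameter set $S$ to be nonempty and $\Gamma\in S$. Both hold here, since $\Gamma\in\mathsf{Type}^{\ast}(\Sigma)$. If $\mathsf{Type}^{\ast}(\Sigma)$ is empty, the statement is vacuous.

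The only potential obstacle is this bookkeeping about the stabilization index, namely ensuring that membership in $\mathcal{T}_{\ell_0}$ already certifies feasibility relative to $\mathcal{T}_{\ell_0}$ itself rather than $\mathcal{T}_{\ell_0-1}$. This is why I use the equality $\mathcal{T}_{\ell_0+1}=\mathcal{T}_{\ell_0}$ rather than mere membership in $\mathcal{T}_{\ell_0}$. No use of Lemma~\ref{lem:mixed-ineq} or of the axioms is needed.
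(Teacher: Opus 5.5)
Your proof is correct and follows the paper's argument: both use the equality $\mathcal{T}_{\ell_0}=\mathcal{T}_{\ell_0+1}$, so that membership in the stable set already certifies feasibility of $\mathsf{FC}(\Gamma,\mathcal{T}_{\ell_0},i)$ relative to that same set. Your extra checks are harmless and somewhat more careful than the paper's one-line proof: the existence of the stabilization index, the persistence of stabilization, and the nonemptiness of $S$ with $\Gamma\in S$.
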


\begin{proof}
	By definition, $\mathsf{Type}^{\ast}(\Sigma)=\mathcal{T}_{\ell_0}
	=\mathcal{T}_{\ell_0+1}$. Hence every
	$\Gamma\in\mathcal{T}_{\ell_0}$ passes the feasibility test with respect to
	$\mathcal{T}_{\ell_0}$, i.e.,
	$\mathsf{FC}(\Gamma,\mathsf{Type}^{\ast}(\Sigma),i)$ is feasible for
	every~$i$.
\end{proof}

\subsection{Proof-Theoretic Extension and Bridge Lemma}
\label{subsec:bridge}

The iterative elimination of \S\ref{subsec:onestep} is a model-construction
tool: it identifies which types can be simultaneously realized. We now define
the extended proof system $\mathbf{PTKv}^{\ast}(\Sigma)$ in a genuinely
\emph{proof-theoretic} manner, and then show that the proof-theoretically
consistent maximal types coincide exactly with
$\mathsf{Type}^{\ast}(\Sigma)$.

\begin{definition}[The extended system
	$\mathbf{PTKv}^{\ast}(\Sigma)$]\label{def:ptkvstar-proof}
	Let $\Sigma$ be a finite closure. For each type
	$\Gamma\in\mathsf{Type}(\Sigma)\setminus\mathsf{Type}^{\ast}(\Sigma)$, let
	$i_\Gamma\in\mathcal{A}$ be an agent witnessing the elimination of $\Gamma$,
	that is, $i_\Gamma$ is such that
	$\mathsf{FC}(\Gamma,\mathcal{T}_{\ell},i_\Gamma)$ is infeasible at the
	stage $\ell$ where $\Gamma$ is first removed (i.e.,
	$\Gamma\in\mathcal{T}_{\ell}\setminus\mathcal{T}_{\ell+1}$). Define
	\[
	\mathbf{PTKv}^{\ast}(\Sigma)
	:=
	\mathbf{PTKv}^{+}
	\;+\;
	\bigl\{\,
	\neg\textstyle\bigwedge\Lambda_{i_\Gamma}^{\Gamma}
	\;\bigm|\;
	\Gamma\in\mathsf{Type}(\Sigma)\setminus
	\mathsf{Type}^{\ast}(\Sigma)
	\,\bigr\}.
	\]
	That is, $\mathbf{PTKv}^{\ast}(\Sigma)$ extends $\mathbf{PTKv}^{+}$ by
	adding, for each eliminated type $\Gamma$, the negation of the conjunction of
	its witnessing-agent's modal literals as a new axiom. Since
	$\mathsf{Type}(\Sigma)$ is finite, only finitely many axioms are added.
\end{definition}

\begin{remark}\label{rmk:sigma-dependence}
	The system $\mathbf{PTKv}^{\ast}(\Sigma)$ depends on the choice of finite
	closure $\Sigma$: the new axioms are determined by the iterative elimination
	procedure relative to $\Sigma$. In the weak completeness theorem below
	(Theorem~\ref{thm:weak-completeness-final}), for each formula $\chi$ we fix
	a finite closure $\Sigma$ of $\chi$ and work within the corresponding system
	$\mathbf{PTKv}^{\ast}(\Sigma)$. This is entirely analogous to the
	standard treatment in probabilistic logics with threshold operators, where
	the auxiliary rules or axioms are formulated relative to a finite closure of
	the target formula; see, e.g., \cite{Fagin1994Reason,Ognjanovic2024Strong}.
	When the choice of $\Sigma$ is clear from context, we write simply
	$\mathbf{PTKv}^{\ast}$.
\end{remark}

\begin{remark}\label{rmk:rule-status}
	The new axioms of $\mathbf{PTKv}^{\ast}(\Sigma)$ are determined by a finite
	computation: the iterative elimination terminates in at most
	$|\mathsf{Type}(\Sigma)|$ steps, and each step involves checking feasibility
	of finitely many linear-arithmetic systems over $\mathbb{R}$. As in many
	axiomatizations of probabilistic logics with threshold operators
	\cite{Fagin1994Reason,Ognjanovic2024Strong}, the resulting system is not
	recursively enumerable in general, but is a well-defined extension of
	$\mathbf{PTKv}^{+}$ by finitely many additional axioms. The
	proof-theoretic notions of derivability and consistency for
	$\mathbf{PTKv}^{\ast}(\Sigma)$ are therefore perfectly standard.
\end{remark}

The key result of this subsection is the following bridge lemma, which
establishes a perfect correspondence between the model-theoretic construction
(iterative elimination) and the proof-theoretic notion (consistency in
$\mathbf{PTKv}^{\ast}(\Sigma)$).

\begin{lemma}[Bridge lemma]\label{lem:bridge}
	Let $\Sigma$ be a finite closure. A type
	$\Gamma\in\mathsf{Type}(\Sigma)$ is
	$\mathbf{PTKv}^{\ast}(\Sigma)$-consistent if and only if
	$\Gamma \in \mathsf{Type}^{\ast}(\Sigma)$.
\end{lemma}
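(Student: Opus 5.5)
The plan is to prove the two directions separately. The left-to-right direction (eliminated types are $\mathbf{PTKv}^{\ast}(\Sigma)$-inconsistent) is immediate. The right-to-left direction (surviving types stay consistent) rests on a combinatorial observation: the system $\mathsf{FC}(\Gamma,S,i)$ depends on $\Gamma$ only through its $i$-modal literals $\Lambda_i^\Gamma$. The one real difficulty is that the new axioms are subject to $\mathrm{Nec}_K$.

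\emph{($\Rightarrow$)} Suppose $\Gamma\notin\mathsf{Type}^{\ast}(\Sigma)$. Then by Definition~\ref{def:ptkvstar-proof}, $\neg\bigwedge\Lambda_{i_\Gamma}^{\Gamma}$ is an axiom of $\mathbf{PTKv}^{\ast}(\Sigma)$. Since $\Lambda_{i_\Gamma}^{\Gamma}\subseteq\Gamma$, \textbf{(TAUT)} gives $\vdash\bigwedge\Gamma\to\bigwedge\Lambda_{i_\Gamma}^{\Gamma}$, and MP yields $\vdash_{\mathbf{PTKv}^{\ast}(\Sigma)}\neg\bigwedge\Gamma$. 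So $\Gamma$ is inconsistent.

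\emph{($\Leftarrow$), key observation.} Fix $\Gamma\in\mathsf{Type}^{\ast}(\Sigma)$ and an eliminated type $\Delta$, removed at stage $\ell$ with witness agent $i=i_\Delta$. I claim that $\Gamma$ contains the negation of some member of $\Lambda_i^\Delta$. Every member of $\Lambda_i^\Delta$ is a formula of $\Sigma$ or the negation of one, so by saturation of $\Gamma$ the only alternative is $\Lambda_i^\Delta\subseteq\Gamma$. Because both types are saturated over the same $\Sigma$, this forces $\Lambda_i^\Gamma=\Lambda_i^\Delta$.

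In that case $\mathsf{FC}(\Gamma,\mathcal{T}_\ell,i)$ and $\mathsf{FC}(\Delta,\mathcal{T}_\ell,i)$ are literally the same linear system: clauses (1)--(5) of Definition~\ref{def:FC} refer to $\Gamma$ only through $\Lambda_i^\Gamma$. Since $\Gamma\in\mathsf{Type}^{\ast}(\Sigma)\subseteq\mathcal{T}_\ell$, the type $\Gamma$ would then have been removed at stage $\ell+1$, which is a contradiction. Hence
\[
\vdash_{\mathbf{PTKv}^{+}}\bigwedge\Gamma\to\neg\bigwedge\Lambda_{i_\Delta}^{\Delta}
\]
for every eliminated $\Delta$. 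In words, every world realizing a surviving type satisfies all the new axioms.

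\emph{($\Leftarrow$), the main obstacle.} If the new axioms could only be used as hypotheses, I would finish by the deduction theorem. From $\vdash_{\mathbf{PTKv}^{\ast}}\neg\bigwedge\Gamma$ we would get $\vdash_{\mathbf{PTKv}^{+}}\bigwedge\mathrm{Ax}^\ast\to\neg\bigwedge\Gamma$, and combining this with the displayed implications would give $\vdash_{\mathbf{PTKv}^{+}}\neg\bigwedge\Gamma$, contradicting $\Gamma\in\mathsf{Type}(\Sigma)$.

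However, $\mathrm{Nec}_K$ applies to the new axioms, so a derivation may use formulas such as $K_j^\theta\neg\bigwedge\Lambda^\Delta_{i_\Delta}$, and the deduction theorem fails. I would therefore argue semantically: exhibit a model in which every theorem of $\mathbf{PTKv}^{\ast}(\Sigma)$ is valid and some world satisfies $\bigwedge\Gamma$.

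For the model I take worlds $\mathsf{Type}^{\ast}(\Sigma)$ (with configuration refinements), and I give each agent at each world a measure obtained from a feasible solution of $\mathsf{FC}(\cdot,\mathsf{Type}^{\ast}(\Sigma),i)$, which exists by Lemma~\ref{lem:fixed-point}. A truth lemma for $\Sigma$-formulas, proved by induction on modal depth using only the feasibility data, never consistency in $\mathbf{PTKv}^{\ast}$, shows that each world satisfies exactly its type. The strict inequalities are obtainable via Lemma~\ref{lem:mixed-ineq}. The key observation then makes every new axiom true at every world, i.e.\ globally valid, and global validity is preserved by $\mathrm{Nec}_K$ and MP. Combined with soundness of $\mathbf{PTKv}^{+}$ (Corollary~\ref{cor:soundness-plus}), this gives soundness of $\mathbf{PTKv}^{\ast}(\Sigma)$ for this model, and since the world $\Gamma$ satisfies $\bigwedge\Gamma$, the type $\Gamma$ is consistent.

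The delicate point, which I expect to be the main work, is making this truth lemma independent of the bridge lemma, so that the argument is not circular.
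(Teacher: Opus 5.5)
Your ($\Rightarrow$) direction is the paper's. Your key observation is exactly the paper's Case~(b) argument: the constraints of $\mathsf{FC}$ depend on a type only through its $i$-modal literals, so infeasibility transfers from $\Delta$ to $\Gamma$. You diverge in how you finish ($\Leftarrow$).

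\textbf{The paper's route.} It stays syntactic. It applies the deduction theorem to the finitely many new axioms to obtain $\mathbf{PTKv}^{+}\vdash\bigwedge\Gamma\to\bigvee_{j}\bigwedge\Lambda^{\Gamma'_j}_{i_{\Gamma'_j}}$. It then uses Cases~(a)/(b) to show that some $\Lambda^{\Gamma'_j}_{i_{\Gamma'_j}}\subseteq\Gamma$, which leads to the infeasibility contradiction.

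\textbf{Your route.} You use the observation, together with the truth lemma, to make every new axiom true at every world of $M_\Sigma$. You then conclude by soundness of $\mathbf{PTKv}^{\ast}(\Sigma)$ for that single model.

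\textbf{Comparison.} Your version is the more careful one. Once the new formulas are axioms, $\mathrm{Nec}_K$ applies to them. A derivation can therefore pass through, e.g., $K_j^{1}\neg\bigwedge\Lambda^{\Delta}_{i_\Delta}$, which $\mathbf{PTKv}^{+}$ does not obtain from $\neg\bigwedge\Lambda^{\Delta}_{i_\Delta}$. The paper's deduction-theorem step is justified only if the new axioms are closed under MP alone, i.e.\ if $\mathrm{Nec}_K$ is restricted to $\mathbf{PTKv}^{+}$-theorems. Global truth in $M_\Sigma$, by contrast, is preserved by both MP and $\mathrm{Nec}_K$.

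What the paper's route would buy, were that step available, is a purely combinatorial bridge lemma independent of the model construction. Yours makes the lemma depend on \S\ref{subsec:canonical-model}--\S\ref{subsec:truth}. The circularity you worry about does not arise. The paper's truth lemma uses only Lemma~\ref{lem:FC-feasible} and saturation of types, never the bridge lemma or $\mathbf{PTKv}^{\ast}$-consistency, so you can cite it as it stands. Moreover, only ($\Rightarrow$) is used later, in Lemma~\ref{lem:lindenbaum}, so the reordering costs nothing downstream.

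\textbf{One caution on strict inequalities.} Do not obtain them from Lemma~\ref{lem:mixed-ineq}, whose hypothesis rests on the unproved Remark~\ref{rmk:strictness}. Instead, read ``feasible'' in Definition~\ref{def:iterative-elim} as feasibility of the full mixed system, with the strict inequalities and the choice of $k$ in clause~(4) included. Lemma~\ref{lem:fixed-point} then supplies the required solutions directly. Your key observation holds verbatim under that reading.
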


\begin{proof}
	\textbf{($\Longleftarrow$): Surviving types are consistent.}\quad
	Let $\Gamma\in\mathsf{Type}^{\ast}(\Sigma)$. We show
	$\mathbf{PTKv}^{\ast}(\Sigma)\nvdash\neg\bigwedge\Gamma$ by contradiction.
	Suppose $\mathbf{PTKv}^{\ast}(\Sigma)\vdash\neg\bigwedge\Gamma$. Since
	$\mathbf{PTKv}^{\ast}(\Sigma)$ is obtained from $\mathbf{PTKv}^{+}$ by
	adding finitely many axioms of the form
	$\neg\bigwedge\Lambda_{i_{\Gamma'}}^{\Gamma'}$ (for
	$\Gamma'\notin\mathsf{Type}^{\ast}(\Sigma)$), and since $\Gamma$ is
	$\mathbf{PTKv}^{+}$-consistent (being an element of
	$\mathsf{Type}(\Sigma)$), at least one new axiom must be used in the
	derivation. Let
	$\{\neg\bigwedge\Lambda_{i_{\Gamma_j'}}^{\Gamma_j'}\}_{j=1}^{r}$
	($r\ge 1$) be the set of all new axioms actually used. By the deduction
	theorem applied to the new axioms and contraposition, propositional reasoning
	in $\mathbf{PTKv}^{+}$ yields:
	\begin{equation}\label{eq:bridge-derive-multi}
		\mathbf{PTKv}^{+}\vdash\;
		\textstyle\bigwedge\Gamma
		\;\to\;
		\bigvee_{j=1}^{r}
		\textstyle\bigwedge\Lambda_{i_{\Gamma_j'}}^{\Gamma_j'}.
	\end{equation}
	Indeed, $\mathbf{PTKv}^{+}$ together with the $r$ new axioms derives
	$\neg\bigwedge\Gamma$. Equivalently,
	$\mathbf{PTKv}^{+}\vdash
	(\bigwedge_{j=1}^r\neg\bigwedge\Lambda_{i_{\Gamma_j'}}^{\Gamma_j'})
	\to\neg\bigwedge\Gamma$.
	By contraposition,
	$\mathbf{PTKv}^{+}\vdash
	\bigwedge\Gamma\to
	\neg\bigwedge_{j=1}^r\neg\bigwedge\Lambda_{i_{\Gamma_j'}}^{\Gamma_j'}$,
	and by de~Morgan's law the consequent is
	$\bigvee_{j=1}^r\bigwedge\Lambda_{i_{\Gamma_j'}}^{\Gamma_j'}$.
	
	We now classify each index $j\in\{1,\dots,r\}$ by whether
	$\Lambda_{i_{\Gamma_j'}}^{\Gamma_j'}\subseteq\Gamma$ holds.
	
	\smallskip\noindent
	\emph{Case~(a): $\Lambda_{i_{\Gamma_j'}}^{\Gamma_j'}\not\subseteq\Gamma$.}\;
	Since $\Gamma$ is saturated over $\Sigma$ and
	$\Lambda_{i_{\Gamma_j'}}^{\Gamma_j'}\subseteq\Sigma$, there exists a
	literal $\lambda\in\Lambda_{i_{\Gamma_j'}}^{\Gamma_j'}$ such that
	$\neg\lambda\in\Gamma$. Therefore
	$\bigwedge\Gamma\vdash\neg\lambda\vdash
	\neg\bigwedge\Lambda_{i_{\Gamma_j'}}^{\Gamma_j'}$
	in propositional logic: the $j$-th disjunct
	$\bigwedge\Lambda_{i_{\Gamma_j'}}^{\Gamma_j'}$ is contradicted by
	$\bigwedge\Gamma$ within $\mathbf{PTKv}^{+}$.
	
	\smallskip\noindent
	\emph{Case~(b): $\Lambda_{i_{\Gamma_j'}}^{\Gamma_j'}\subseteq\Gamma$.}\;
	Then $\bigwedge\Gamma\vdash
	\bigwedge\Lambda_{i_{\Gamma_j'}}^{\Gamma_j'}$: the $j$-th disjunct is a
	consequence of $\bigwedge\Gamma$.
	
	\smallskip\noindent
	\emph{Completing the argument.}\;
	Suppose every index $j$ falls under Case~(a). Then $\bigwedge\Gamma$
	refutes each disjunct $\bigwedge\Lambda_{i_{\Gamma_j'}}^{\Gamma_j'}$
	individually:
	$\mathbf{PTKv}^{+}\vdash\bigwedge\Gamma\to
	\neg\bigwedge\Lambda_{i_{\Gamma_j'}}^{\Gamma_j'}$
	for every~$j$. Combining with \eqref{eq:bridge-derive-multi}, we obtain
	\[
	\mathbf{PTKv}^{+}\vdash\;
	\bigwedge\Gamma\to
	\Bigl(\bigvee_{j=1}^{r}
	\bigwedge\Lambda_{i_{\Gamma_j'}}^{\Gamma_j'}\Bigr)
	\quad\text{and}\quad
	\mathbf{PTKv}^{+}\vdash\;
	\bigwedge\Gamma\to
	\bigwedge_{j=1}^{r}
	\neg\bigwedge\Lambda_{i_{\Gamma_j'}}^{\Gamma_j'},
	\]
	so $\mathbf{PTKv}^{+}\vdash\neg\bigwedge\Gamma$, contradicting
	$\Gamma\in\mathsf{Type}(\Sigma)$. Therefore at least one index~$j_0$ falls
	under Case~(b): $\Lambda_{i_{\Gamma_{j_0}'}}^{\Gamma_{j_0}'}
	\subseteq\Gamma$.
	
	Since both $\Gamma$ and $\Gamma_{j_0}'$ are saturated over $\Sigma$ and every
	literal in $\Lambda_{i_{\Gamma_{j_0}'}}^{\Gamma_{j_0}'}$ belongs to
	$\Gamma$, we conclude
	$\Lambda_{i_{\Gamma_{j_0}'}}^{\Gamma_{j_0}'}
	=\Lambda_{i_{\Gamma_{j_0}'}}^{\Gamma}$: the two types share the same
	$i_{\Gamma_{j_0}'}$-modal profile.
	
	Now, $\Gamma_{j_0}'$ was eliminated at stage $\ell$ because
	$\mathsf{FC}(\Gamma_{j_0}',\mathcal{T}_\ell,i_{\Gamma_{j_0}'})$ is
	infeasible. The constraints of
	$\mathsf{FC}(\Gamma_{j_0}',\mathcal{T}_\ell,i_{\Gamma_{j_0}'})$ depend on
	$\Gamma_{j_0}'$ only through
	$\Lambda_{i_{\Gamma_{j_0}'}}^{\Gamma_{j_0}'}$. Since
	$\Lambda_{i_{\Gamma_{j_0}'}}^{\Gamma_{j_0}'}
	=\Lambda_{i_{\Gamma_{j_0}'}}^{\Gamma}$, the system
	$\mathsf{FC}(\Gamma,\mathcal{T}_\ell,i_{\Gamma_{j_0}'})$ has the same
	constraints and is therefore also infeasible. Since
	$\Gamma\in\mathsf{Type}^{\ast}(\Sigma)\subseteq\mathcal{T}_\ell$, this
	means $\Gamma$ would fail the feasibility test at stage $\ell$ and be
	eliminated at stage $\ell+1$, contradicting
	$\Gamma\in\mathsf{Type}^{\ast}(\Sigma)$.
	
	\medskip
	\textbf{($\Longrightarrow$): Eliminated types are inconsistent.}\quad
	Let $\Gamma\in\mathsf{Type}(\Sigma)\setminus\mathsf{Type}^{\ast}(\Sigma)$.
	By Definition~\ref{def:ptkvstar-proof},
	$\neg\bigwedge\Lambda_{i_\Gamma}^{\Gamma}$ is an axiom of
	$\mathbf{PTKv}^{\ast}(\Sigma)$. Since
	$\Lambda_{i_\Gamma}^{\Gamma}\subseteq\Gamma$, we have
	$\mathbf{PTKv}^{\ast}(\Sigma)\vdash\bigwedge\Gamma\to
	\bigwedge\Lambda_{i_\Gamma}^{\Gamma}$, and hence
	$\mathbf{PTKv}^{\ast}(\Sigma)\vdash\neg\bigwedge\Gamma$: the type $\Gamma$
	is $\mathbf{PTKv}^{\ast}(\Sigma)$-inconsistent.
\end{proof}

\begin{remark}\label{rmk:bridge-key}
	The key mechanism in the ($\Longleftarrow$) direction is twofold.
	First, the classification into Cases~(a) and~(b) handles the possibility
	that multiple new axioms are used simultaneously in the derivation:
	Case~(a) disjuncts are individually refuted by $\bigwedge\Gamma$, and at
	least one Case~(b) disjunct must exist, for otherwise
	$\bigwedge\Gamma$ would entail a disjunction while simultaneously refuting
	every disjunct, yielding a $\mathbf{PTKv}^{+}$-inconsistency of $\Gamma$.
	Second, the \emph{modal-profile invariance} of $\mathsf{FC}$---the
	observation that the constraints of $\mathsf{FC}(\Gamma,S,i)$ depend on
	$\Gamma$ only through $\Lambda_i^\Gamma$---converts the proof-theoretic
	inclusion $\Lambda_{i}^{\Gamma'}\subseteq\Gamma$ into a constraint-level
	identity, enabling the transfer of infeasibility from $\Gamma'$ to $\Gamma$.
\end{remark}

\subsection{Canonical Model Construction}
\label{subsec:canonical-model}

We now define all components of the canonical model.

\begin{definition}[Domain and world set]\label{def:world-set}
	Let
	$D_{\Sigma}:=\{d_k \mid k\in K\}$
	be a domain consisting of $|T_{\Sigma}|$ pairwise distinct elements indexed
	by $K$. Define the world set
	\[
	W_{\Sigma}
	:=
	\bigcup_{\Delta\in\mathsf{Type}^{\ast}(\Sigma)}
	\bigcup_{f\in\mathcal{F}(\Delta)}
	\{\, w_{\Delta,f,n}\mid n\in\mathbb{N}^{+}\,\}.
	\]
\end{definition}

\begin{definition}[Valuation and term-value
	function]\label{def:val-V}
	For each world $w_{\Delta,f,n}\in W_{\Sigma}$, define:
	\begin{itemize}
		\item $V(w_{\Delta,f,n},p):=1$ iff $p\in\Delta$, for every propositional
		variable $p$;
		\item $\mathsf{val}(w_{\Delta,f,n},t):=d_{f(t)}$, for every term
		$t\in T_{\Sigma}$.
	\end{itemize}
\end{definition}

\begin{lemma}[Feasibility for refined types]\label{lem:FC-feasible}
	For every $\Gamma\in\mathsf{Type}^{\ast}(\Sigma)$ and every agent
	$i\in\mathcal{A}$, the constraint system
	$\mathsf{FC}(\Gamma,\mathsf{Type}^{\ast}(\Sigma),i)$ has a solution
	$\{z_{\Delta,f}^{\ast,i}\}_{\Delta\in\mathsf{Type}^{\ast}(\Sigma),\,
		f\in\mathcal{F}(\Delta)}$
	satisfying all strict and non-strict inequalities.
\end{lemma}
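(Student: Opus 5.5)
The plan is to reduce the statement to the fixed-point property, Lemma~\ref{lem:fixed-point}, and then check that the feasibility obtained there really includes the strict inequalities. Fix $\Gamma\in\mathsf{Type}^{\ast}(\Sigma)$ and $i\in\mathcal{A}$. Since $\mathsf{Type}^{\ast}(\Sigma)=\mathcal{T}_{\ell_0}=\mathcal{T}_{\ell_0+1}$, Lemma~\ref{lem:fixed-point} already says that $\mathsf{FC}(\Gamma,\mathsf{Type}^{\ast}(\Sigma),i)$ is feasible. The only work is to unpack what ``feasible'' means for a system of this shape. That system is not purely linear, because clause~(4) contains an existential over $k\in K$.

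First I will make clause~(4) linear. For each literal $Kv_i^\eta(t)\in\Gamma$, choose a witness $k_{t,\eta}\in K$. There are finitely many such literals and $K$ is finite, so there are finitely many witness choices $\kappa$. Each choice $\kappa$ gives a finite mixed linear system $\mathsf{FC}_\kappa$ of exactly the form treated in Lemma~\ref{lem:mixed-ineq}. Its non-strict rows are nonnegativity, normalization (written as two opposite inequalities), the positive $K$-literals, and the chosen $Kv$-locking rows. Its strict rows are the negative $K$-literals and, for each $\neg Kv_i^\eta(t)\in\Gamma$ and each $k\in K$, the row for that pair. Feasibility of $\mathsf{FC}$ in Definition~\ref{def:FC} means that for some $\kappa$ there is a vector satisfying every row of $\mathsf{FC}_\kappa$, strict rows included. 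Such a vector is a solution of the original system. It is also the solution $\{z^{\ast,i}_{\Delta,f}\}$ required by the lemma, indexed exactly by $\Delta\in\mathsf{Type}^{\ast}(\Sigma)$ and $f\in\mathcal{F}(\Delta)$.

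The step I expect to be the main obstacle is this: the feasibility test in the elimination might be read, or carried out in practice, on the closed relaxation, as an LP solver would. Then the strict rows are not automatically guaranteed. In that case I will take the feasible closed relaxation of some $\mathsf{FC}_\kappa$ and apply Lemma~\ref{lem:mixed-ineq}. What has to be checked is that no strict row is an implicit equality of the relaxed polytope, and for this I will invoke Remark~\ref{rmk:strictness}.
\begin{itemize}
\item For a negative $K$-literal $\neg K_i^\theta\varphi$, I would use \textbf{(KExcl)} and \textbf{(KMon)} together with saturation of $\Gamma$. These should rule out the case where the non-strict rows force the mass of $\{\Delta:\varphi\in\Delta\}$ to equal exactly $\theta$.
\item For the $\neg Kv$ rows, I would argue in the same way. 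If some row $\sum_{f(t)=k} z\ge\eta$ were forced to equality, then the value class $k$ would reach $\eta$. This contradicts $\neg Kv_i^\eta(t)\in\Gamma$ by Proposition~\ref{prop:unique}, read proof-theoretically through \textbf{(KvMon)}.
\end{itemize}
Once Lemma~\ref{lem:mixed-ineq} applies, a relative-interior point of the relaxed polytope satisfies all strict and non-strict inequalities simultaneously, and that point is the desired $z^{\ast,i}$.
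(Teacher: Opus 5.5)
Your first argument is a complete proof, and it takes a different route from the paper's. Definition~\ref{def:FC} builds the strict rows of clauses~(3) and~(5) into $\mathsf{FC}$. So if ``feasible'' in Definition~\ref{def:iterative-elim} means that the whole system has a solution, Lemma~\ref{lem:fixed-point} already is the statement. Your splitting of clause~(4) over witness choices $\kappa$ is harmless. It is also the correct way to turn $\mathsf{FC}$ into genuinely linear systems, a step the paper skips before invoking Lemma~\ref{lem:mixed-ineq}. The paper instead reads ``feasible'' as feasibility of the closed relaxation (its proof says so explicitly), and then tries to recover the strict rows via Lemma~\ref{lem:mixed-ineq} and Remark~\ref{rmk:strictness}. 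Your reading costs only that the elimination test becomes a strict-feasibility test. That is still a finite, decidable linear-arithmetic check: maximize a common slack $\varepsilon$ on the strict rows and test $\varepsilon>0$.

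Your fallback branch is exactly the paper's argument, and it has a genuine gap, which the paper shares. Remark~\ref{rmk:strictness} asserts that no strict row is an implicit equality, and this is false; under the relaxation reading the lemma itself fails.
\begin{itemize}
\item Take $\chi=\neg Kv_i^1(t)$ with $T_\Sigma=\{t\}$, so $K=\{1\}$ and every assignment sends $t$ to $1$. The type containing $t=t$ and $\neg Kv_i^1(t)$ is $\mathbf{PTKv}^{+}$-consistent: it is satisfiable with two values of $t$ of probability $\frac{1}{2}$ each, and $\mathbf{PTKv}^{+}$ is sound. Yet its clause-(5) row for $k=1$ is the total mass, which normalization forces to equal $\eta=1$. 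The relaxation is feasible and the strict system is not, so this type survives into $\mathsf{Type}^{\ast}(\Sigma)$ with no strict solution.
\item The same example shows why your justification for the $\neg Kv$ rows does not work. A relaxed point with mass exactly $\eta$ on a value class contradicts nothing syntactically. Value classes indexed by $K$ are not expressible in the object language, and $\mathbf{PTKv}^{+}$ has no axiom deriving $Kv_i^\eta(t)$ from probability information; only \textbf{(KvMon)} and \textbf{(KvEqSub1)} mention $Kv$ at all.
\item Your \textbf{(KExcl)}/\textbf{(KMon)} argument for negative $K$-literals fails for a related reason. Equalities can be forced by restricting the variable space to $\mathcal{T}_\ell$, which $\mathbf{PTKv}^{+}$ cannot see. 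Take $\chi=\neg K_i^1Kv_i^{2/3}(t)$ with $T_\Sigma=\{t\}$. Every type containing $\neg Kv_i^{2/3}(t)$ is removed at stage $0$, even on the relaxation reading. After that, every remaining type contains $Kv_i^{2/3}(t)$. So the strict row coming from $\neg K_i^1Kv_i^{2/3}(t)$ in the type containing $Kv_i^{2/3}(t)$ is forced to equal its threshold $1$.
\end{itemize}
So drop the fallback. The lemma holds only under your first, strict reading of feasibility, and on that reading your one-line appeal to Lemma~\ref{lem:fixed-point} is the whole proof.
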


\begin{proof}
	By Lemma~\ref{lem:fixed-point},
	$\mathsf{FC}(\Gamma,\mathsf{Type}^{\ast}(\Sigma),i)$ is feasible (i.e., the
	closed relaxation has a solution). By Remark~\ref{rmk:strictness} and
	Lemma~\ref{lem:mixed-ineq}, the original mixed strict/non-strict system
	admits a solution.
\end{proof}

\begin{definition}[Local probability measures]\label{def:local-measure}
	For each $\Gamma\in\mathsf{Type}^{\ast}(\Sigma)$ and each agent
	$i\in\mathcal{A}$, let
	$\{z_{\Delta,f}^{\ast,i}\}$
	be a fixed solution of
	$\mathsf{FC}(\Gamma,\mathsf{Type}^{\ast}(\Sigma),i)$ as guaranteed by
	Lemma~\ref{lem:FC-feasible}. Define a point-mass function on $W_{\Sigma}$ by
	\[
	p_i^{\Gamma}(w_{\Delta,f,n}):=z_{\Delta,f}^{\ast,i}\cdot 2^{-n},
	\]
	and let $P_i^{\Gamma}$ be the countably additive measure
	$P_i^{\Gamma}(X):=\sum_{w\in X}p_i^{\Gamma}(w)$.
	Then
	\[
	P_i^{\Gamma}(W_{\Sigma})
	=
	\sum_{\Delta\in\mathsf{Type}^{\ast}(\Sigma)}
	\sum_{f\in\mathcal{F}(\Delta)}
	z_{\Delta,f}^{\ast,i}
	\Bigl(\sum_{n\ge 1}2^{-n}\Bigr)
	=
	\sum_{\Delta\in\mathsf{Type}^{\ast}(\Sigma)}
	\sum_{f\in\mathcal{F}(\Delta)}
	z_{\Delta,f}^{\ast,i}
	=1,
	\]
	where the last equality holds by the normalization constraint of
	$\mathsf{FC}(\Gamma,\mathsf{Type}^{\ast}(\Sigma),i)$.
\end{definition}

\begin{definition}[Canonical model]\label{def:canonical-model}
	The canonical model is
	\[
	M_{\Sigma}
	:=(W_{\Sigma},\,D_{\Sigma},\,\{P_i\}_{i\in\mathcal{A}},\,V,\,
	\mathsf{val}),
	\]
	where for each agent $i\in\mathcal{A}$ and each world
	$w_{\Delta,f,n}\in W_{\Sigma}$,
	$P_i(w_{\Delta,f,n}):=P_i^{\Delta}$.
\end{definition}

\subsection{Truth Lemma by Induction on Modal Depth}
\label{subsec:truth}

\begin{definition}[Modal depth]\label{def:modal-depth}
	The modal depth $\mathsf{md}(\varphi)$ of a formula
	$\varphi\in\mathcal{L}_{\mathrm{PTMLKv}}$ is defined by:
	\begin{itemize}
		\item $\mathsf{md}(p)=\mathsf{md}(t=s)=0$;
		\item $\mathsf{md}(\neg\varphi)=\mathsf{md}(\varphi)$;\quad
		$\mathsf{md}(\varphi\to\psi)
		=\max\{\mathsf{md}(\varphi),\mathsf{md}(\psi)\}$;
		\item $\mathsf{md}(K_i^\theta\varphi)=\mathsf{md}(\varphi)+1$;
		\item $\mathsf{md}(Kv_i^\eta(t))=1$.
	\end{itemize}
\end{definition}

\begin{lemma}[Truth lemma]\label{lem:truth-revised}
	For every formula $\varphi\in\Sigma$ and every world
	$w_{\Delta,f,n}\in W_{\Sigma}$,
	\[
	M_{\Sigma},w_{\Delta,f,n}\models\varphi
	\iff
	\varphi\in\Delta.
	\]
\end{lemma}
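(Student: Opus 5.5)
We prove the statement by induction on $\mathsf{md}(\varphi)$, with an inner induction on the construction of $\varphi$ at each fixed depth. The point of this order is that, when we evaluate $K_i^\theta\psi$ or $Kv_i^\eta(t)$, we only need the truth lemma for formulas of strictly smaller depth (or for equality atoms). We therefore never need it for formulas of the same depth, which avoids the circularity between fixing the measures $P_i^{\Delta}$ and evaluating truth. Throughout, $\Delta$ ranges over $\mathsf{Type}^{\ast}(\Sigma)$ and $f\in\mathcal{F}(\Delta)$. Terms outside $T_\Sigma$ receive an arbitrary fixed value, which is harmless because they do not occur in $\Sigma$.

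\emph{Base and Boolean cases.} For $p\in\Sigma$, the claim is immediate from Definition~\ref{def:val-V}. For $(t=s)\in\Sigma$ we have $t,s\in T_\Sigma$, and
\[
M_\Sigma,w_{\Delta,f,n}\models t=s\iff d_{f(t)}=d_{f(s)}\iff f(t)=f(s)\iff (t=s)\in\Delta.
\]
The second step uses that the $d_k$ are pairwise distinct, and the third uses the defining property of $\mathcal{F}(\Delta)$ (Definition~\ref{def:config-space}). For $\neg\psi$ and $\psi\to\chi$, we use closure of $\Sigma$ under subformulas and single negations, together with saturation and $\mathbf{PTKv}^{+}$-consistency of $\Delta$. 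This gives exactly one of $\psi\in\Delta$ and $\neg\psi\in\Delta$, and $(\psi\to\chi)\in\Delta$ iff $\psi\notin\Delta$ or $\chi\in\Delta$, by (TAUT) and maximality. The inductive hypothesis then finishes these cases.

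\emph{Case $K_i^\theta\psi\in\Sigma$.} Here $\mathsf{md}(\psi)<\mathsf{md}(K_i^\theta\psi)$ and $\psi\in\Sigma$. By the inductive hypothesis,
\[
\llbracket\psi\rrbracket^{M_\Sigma}=\{w_{\Delta',f',n}\mid \psi\in\Delta'\}.
\]
Countable additivity and $\sum_n 2^{-n}=1$ (Definition~\ref{def:local-measure}) then give
\[
P_i^{\Delta}\bigl(\llbracket\psi\rrbracket\bigr)=\sum_{\Delta'\ni\psi}\ \sum_{f'\in\mathcal{F}(\Delta')}z^{\ast,i}_{\Delta',f'},
\]
where $z^{\ast,i}$ is the chosen solution of $\mathsf{FC}(\Delta,\mathsf{Type}^{\ast}(\Sigma),i)$. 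If $K_i^\theta\psi\in\Delta$, clause~3 of $\mathsf{FC}$ gives that this sum is $\ge\theta$. Otherwise saturation gives $\neg K_i^\theta\psi\in\Delta$, and the strict part of clause~3 gives that the sum is $<\theta$. Here we rely on Lemma~\ref{lem:FC-feasible}, which provides a solution satisfying the strict inequalities as well.

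\emph{Case $Kv_i^\eta(t)\in\Sigma$.} Here $t\in T_\Sigma$. For each $k\in K$, the base case shows that $\llbracket t=d_k\rrbracket^{M_\Sigma}$ consists exactly of the worlds $w_{\Delta',f',n}$ with $f'(t)=k$. Its $P_i^{\Delta}$-measure is therefore $\sum_{\Delta'}\sum_{f'(t)=k}z^{\ast,i}_{\Delta',f'}$. Since $D_\Sigma=\{d_k\mid k\in K\}$, these are all the candidate values. Clause~4 of $\mathsf{FC}$ then yields a witness $d_k$ when $Kv_i^\eta(t)\in\Delta$. Proposition~\ref{prop:unique} (using $\eta>\frac12$) turns this bare existence into the required $\exists!$. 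Conversely, if $\neg Kv_i^\eta(t)\in\Delta$, clause~5 bounds every value class strictly below $\eta$, so $Kv_i^\eta(t)$ fails.

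\emph{Main obstacle.} The step I expect to be delicate is the realization of the strict constraints in the $K$ and $Kv$ cases. The whole argument rests on Lemma~\ref{lem:FC-feasible}, that is, on the claim (Remark~\ref{rmk:strictness}) that no strict constraint is an implicit equality of the closed relaxation. I take this as given from the earlier results. Apart from that, the only thing to verify carefully is that the induction measure genuinely decreases: $Kv_i^\eta(t)$ has depth $1$ and its evaluation only uses equality atoms of depth $0$, and $K_i^\theta\psi$ only uses $\psi$ of smaller depth. Given that, the measures never depend on truth at the same level.
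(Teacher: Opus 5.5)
Your proposal is correct and is essentially the paper's proof: strong induction on modal depth, the extension of $\psi$ read off from the induction hypothesis, the same computation of $P_i^{\Delta}$ checked against clauses (3)--(5) of $\mathsf{FC}(\Delta,\mathsf{Type}^{\ast}(\Sigma),i)$, and the same reliance on Lemma~\ref{lem:FC-feasible}; your explicit appeal to Proposition~\ref{prop:unique} and your extension of $\mathsf{val}$ beyond $T_\Sigma$ only make explicit points the paper leaves tacit. On your ``main obstacle'', you need not lean on Remark~\ref{rmk:strictness}, which is doubtful because \textbf{(KExcl)} does not rule out a type containing both $\neg K_i^{1/2}p$ and $\neg K_i^{1/2}\neg p$, and in the closed relaxation both of its strict constraints are forced to equalities; if feasibility in Definition~\ref{def:iterative-elim} is read for the full mixed system, as $\mathsf{FC}$ is literally defined, then Lemma~\ref{lem:fixed-point} already gives a solution meeting the strict inequalities.
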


\begin{proof}
	By strong induction on $\mathsf{md}(\varphi)$.
	
	\medskip
	\noindent\textbf{Base case: $\mathsf{md}(\varphi)=0$.}
	
	\emph{Propositional variables.}\;
	$M_{\Sigma},w_{\Delta,f,n}\models p$ iff $V(w_{\Delta,f,n},p)=1$
	iff $p\in\Delta$.
	
	\emph{Equality formulas.}\;
	For $t,s\in T_{\Sigma}$,
	\[
	M_{\Sigma},w_{\Delta,f,n}\models t=s
	\iff d_{f(t)}=d_{f(s)}
	\iff f(t)=f(s)
	\iff (t=s)\in\Delta,
	\]
	where the last equivalence uses $f\in\mathcal{F}(\Delta)$.
	
	\emph{Boolean connectives.}\;
	By the induction hypothesis and maximal saturation of $\Delta$.
	
	\medskip
	\noindent\textbf{Inductive step: $\mathsf{md}(\varphi)=m+1$.}
	
	\emph{Boolean connectives at depth $m+1$.}\;
	By the induction hypothesis and maximal saturation.
	
	\medskip
	\emph{Case $\varphi = K_i^\theta\psi$ with $\mathsf{md}(\psi)\le m$.}
	
	\smallskip\noindent
	\textsc{Identifying the semantic extension.}\quad
	By the induction hypothesis for $\psi$,
	\begin{equation}\label{eq:psi-extension}
		\llbracket\psi\rrbracket^{M_{\Sigma}}
		=
		\bigcup_{\substack{\Delta'\in\mathsf{Type}^{\ast}(\Sigma)\\
				\psi\in\Delta'}}
		\bigcup_{f'\in\mathcal{F}(\Delta')}
		\{w_{\Delta',f',n'}\mid n'\in\mathbb{N}^{+}\}.
	\end{equation}
	
	\smallskip\noindent
	\textsc{Computing the probability.}\quad
	\begin{align}
		P_i^{\Delta}(\llbracket\psi\rrbracket^{M_{\Sigma}})
		&=
		\sum_{\substack{\Delta'\in\mathsf{Type}^{\ast}(\Sigma)\\
				\psi\in\Delta'}}
		\sum_{f'\in\mathcal{F}(\Delta')}
		z_{\Delta',f'}^{\ast,i}
		\cdot\underbrace{\sum_{n'\ge 1}2^{-n'}}_{=\,1}
		=
		\sum_{\substack{\Delta'\in\mathsf{Type}^{\ast}(\Sigma)\\
				\psi\in\Delta'}}
		\sum_{f'\in\mathcal{F}(\Delta')}
		z_{\Delta',f'}^{\ast,i}.
		\label{eq:prob-computation}
	\end{align}
	
	\smallskip\noindent
	\textsc{Matching constraints.}\quad
	The right-hand side of \eqref{eq:prob-computation} is exactly the expression
	constrained by clause~(3) of
	$\mathsf{FC}(\Delta,\mathsf{Type}^{\ast}(\Sigma),i)$: the summation range
	$\mathsf{Type}^{\ast}(\Sigma)$ coincides with both the variable space of the
	constraint system and the label space of $W_{\Sigma}$.
	\begin{itemize}
		\item If $K_i^\theta\psi\in\Delta$: the constraint gives
		$P_i^{\Delta}(\llbracket\psi\rrbracket)\ge\theta$, hence
		$M_{\Sigma},w_{\Delta,f,n}\models K_i^\theta\psi$.
		\item If $K_i^\theta\psi\notin\Delta$: by saturation
		$\neg K_i^\theta\psi\in\Delta$, the constraint gives
		$P_i^{\Delta}(\llbracket\psi\rrbracket)<\theta$, hence
		$M_{\Sigma},w_{\Delta,f,n}\not\models K_i^\theta\psi$.
	\end{itemize}
	
	\medskip
	\emph{Case $\varphi = Kv_i^\eta(t)$ with $\eta\in\Theta_V^{+}$.}
	
	\smallskip\noindent
	\textsc{Value-locking probabilities.}\quad
	For each $k\in K$,
	\begin{equation}\label{eq:value-prob}
		P_i^{\Delta}\bigl(\{w'\mid\mathsf{val}(w',t)=d_k\}\bigr)
		=
		\sum_{\Delta'\in\mathsf{Type}^{\ast}(\Sigma)}
		\sum_{\substack{f'\in\mathcal{F}(\Delta')\\f'(t)=k}}
		z_{\Delta',f'}^{\ast,i},
	\end{equation}
	using only Definition~\ref{def:val-V} (no modal induction hypothesis
	needed).
	
	\smallskip\noindent
	\textsc{Forward.}\quad
	If $Kv_i^\eta(t)\in\Delta$: clause~(4) of
	$\mathsf{FC}(\Delta,\mathsf{Type}^{\ast}(\Sigma),i)$ yields $k_0\in K$ with
	$P_i^{\Delta}(\{w'\mid\mathsf{val}(w',t)=d_{k_0}\})\ge\eta$,
	so $M_{\Sigma},w_{\Delta,f,n}\models Kv_i^\eta(t)$.
	
	\smallskip\noindent
	\textsc{Backward.}\quad
	If $Kv_i^\eta(t)\notin\Delta$: by saturation
	$\neg Kv_i^\eta(t)\in\Delta$, clause~(5) gives
	$P_i^{\Delta}(\{w'\mid\mathsf{val}(w',t)=d_k\})<\eta$ for all $k\in K$.
	Since $D_{\Sigma}=\{d_k:k\in K\}$ is the entire domain,
	$M_{\Sigma},w_{\Delta,f,n}\not\models Kv_i^\eta(t)$.
\end{proof}

\begin{remark}\label{rmk:no-circularity}
	Three structural features ensure the absence of circularity:
	\begin{enumerate}
		\item The $K_i^\theta\psi$ case uses the induction hypothesis at strictly
		lower modal depth; the $Kv_i^\eta(t)$ case uses no induction
		hypothesis at all.
		\item The summation range in \eqref{eq:prob-computation} is
		$\mathsf{Type}^{\ast}(\Sigma)$, matching both the variable space of
		$\mathsf{FC}$ and the label space of $W_{\Sigma}$. This three-way
		match, guaranteed by the iterative-elimination construction, ensures
		exact normalization and correct probability computation.
		\item Each agent's constraint system is solved independently (via
		agent-indexed solutions $z_{\Delta,f}^{\ast,i}$) over the shared
		world set $W_{\Sigma}$, resolving multi-agent value conflicts.
	\end{enumerate}
\end{remark}

\subsection{Weak Completeness}
\label{subsec:weak-completeness}

\begin{lemma}[Finite Lindenbaum extension]\label{lem:lindenbaum}
	Let $\Sigma$ be a finite closure and let $\chi\in\Sigma$ be
	$\mathbf{PTKv}^{\ast}(\Sigma)$-consistent (i.e.,
	$\mathbf{PTKv}^{\ast}(\Sigma)\nvdash\neg\chi$). Then there exists a type
	$\Gamma\in\mathsf{Type}^{\ast}(\Sigma)$ such that $\chi\in\Gamma$.
\end{lemma}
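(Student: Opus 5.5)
The plan is the standard finite Lindenbaum construction over $\Sigma$, carried out relative to $\mathbf{PTKv}^{\ast}(\Sigma)$, followed by an appeal to the bridge lemma (Lemma~\ref{lem:bridge}). Enumerate $\Sigma$ as $\varphi_1,\dots,\varphi_N$ (it is finite). Set $\Gamma_0:=\{\chi\}$, which is $\mathbf{PTKv}^{\ast}(\Sigma)$-consistent by hypothesis. Given $\Gamma_k$, let $\Gamma_{k+1}:=\Gamma_k\cup\{\varphi_{k+1}\}$ if this set is $\mathbf{PTKv}^{\ast}(\Sigma)$-consistent. Otherwise let $\Gamma_{k+1}:=\Gamma_k\cup\{\mathord{\sim}\varphi_{k+1}\}$, where $\mathord{\sim}\varphi$ denotes $\neg\varphi$, or $\psi$ if $\varphi=\neg\psi$. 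By closure of $\Sigma$ under single negations, $\mathord{\sim}\varphi_{k+1}$ lies in $\Sigma$. Finally, set $\Gamma:=\Gamma_N\subseteq\Sigma$.

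The key step is to show that consistency is preserved at each stage. Suppose both $\Gamma_k\cup\{\varphi\}$ and $\Gamma_k\cup\{\mathord{\sim}\varphi\}$ were inconsistent. Then $\mathbf{PTKv}^{\ast}(\Sigma)\vdash\bigwedge\Gamma_k\to\neg\varphi$ and $\mathbf{PTKv}^{\ast}(\Sigma)\vdash\bigwedge\Gamma_k\to\neg\mathord{\sim}\varphi$. Since $\mathbf{PTKv}^{\ast}(\Sigma)$ contains (TAUT) and is closed under MP (the added axioms are just extra axioms, so derivability is still closed under propositional reasoning), we get $\mathbf{PTKv}^{\ast}(\Sigma)\vdash\neg\bigwedge\Gamma_k$. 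This contradicts the induction hypothesis. Hence $\Gamma$ is $\mathbf{PTKv}^{\ast}(\Sigma)$-consistent. It is also saturated over $\Sigma$: for each $\varphi\in\Sigma$, either $\varphi$ or its single negation was added. The slight care needed here is to check that the paper's notion of saturation ("$\varphi\in\Gamma$ or $\neg\varphi\in\Gamma$") is met even when $\varphi=\neg\psi$ and $\neg\neg\psi\notin\Sigma$. In that case, adding $\psi$ in place of $\neg\neg\psi$ means $\neg\varphi$ is not literally in $\Gamma$. I would resolve this by reading saturation modulo single negation, which is the sense in which the closure was defined, and note that $\neg\psi\in\Gamma$ or $\psi\in\Gamma$ holds for the unnegated $\psi$.

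Next, I would transfer from $\mathbf{PTKv}^{\ast}(\Sigma)$-consistency to $\mathbf{PTKv}^{+}$-consistency. Every theorem of $\mathbf{PTKv}^{+}$ is a theorem of $\mathbf{PTKv}^{\ast}(\Sigma)$, so if $\mathbf{PTKv}^{+}\vdash\neg\bigwedge\Gamma$ then $\mathbf{PTKv}^{\ast}(\Sigma)\vdash\neg\bigwedge\Gamma$. Hence $\Gamma$ is $\mathbf{PTKv}^{+}$-consistent and saturated, that is, $\Gamma\in\mathsf{Type}(\Sigma)$. Since $\Gamma$ is moreover $\mathbf{PTKv}^{\ast}(\Sigma)$-consistent, the ($\Longrightarrow$) direction of Lemma~\ref{lem:bridge} yields $\Gamma\in\mathsf{Type}^{\ast}(\Sigma)$. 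By construction $\chi\in\Gamma_0\subseteq\Gamma$.

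No step is deep. The only point I expect to need attention is the bookkeeping with single negations and saturation just described; everything else is propositional reasoning inside $\mathbf{PTKv}^{\ast}(\Sigma)$ plus the bridge lemma.
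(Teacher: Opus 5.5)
Your proposal is correct and follows the paper's proof essentially step for step. Both arguments run a one-formula-at-a-time Lindenbaum construction over $\Sigma$ relative to $\mathbf{PTKv}^{\ast}(\Sigma)$, use the same both-choices-inconsistent argument for preserving consistency, transfer to $\mathbf{PTKv}^{+}$-consistency to get $\Gamma\in\mathsf{Type}(\Sigma)$, and then apply the ($\Longrightarrow$) direction of the bridge lemma to conclude $\Gamma\in\mathsf{Type}^{\ast}(\Sigma)$.

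Your use of the single negation $\mathord{\sim}\varphi$, with saturation read modulo single negation, is actually more careful than the paper, which adds $\neg\sigma_{j+1}$ even when that formula need not lie in $\Sigma$. Your reading is also the one the construction needs: read literally with $\Gamma\subseteq\Sigma$, the saturation clause would force $\neg\psi\in\Gamma$ whenever $\neg\psi\in\Sigma$ but $\neg\neg\psi\notin\Sigma$.
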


\begin{proof}
	We extend $\{\chi\}$ to a maximal
	$\mathbf{PTKv}^{\ast}(\Sigma)$-consistent subset of $\Sigma$ by the standard
	one-formula-at-a-time construction. Enumerate
	$\Sigma=\{\sigma_1,\dots,\sigma_N\}$. Set $\Phi_0:=\{\chi\}$ and define
	inductively: if $\Phi_j\cup\{\sigma_{j+1}\}$ is
	$\mathbf{PTKv}^{\ast}(\Sigma)$-consistent, let
	$\Phi_{j+1}:=\Phi_j\cup\{\sigma_{j+1}\}$; otherwise, let
	$\Phi_{j+1}:=\Phi_j\cup\{\neg\sigma_{j+1}\}$.
	
	\smallskip\noindent
	\emph{Claim: at each step, at least one choice preserves consistency.}\;
	Suppose both $\Phi_j\cup\{\sigma_{j+1}\}$ and
	$\Phi_j\cup\{\neg\sigma_{j+1}\}$ are
	$\mathbf{PTKv}^{\ast}(\Sigma)$-inconsistent. Then
	$\mathbf{PTKv}^{\ast}(\Sigma)\vdash\bigwedge\Phi_j\to\neg\sigma_{j+1}$
	and
	$\mathbf{PTKv}^{\ast}(\Sigma)\vdash\bigwedge\Phi_j\to\sigma_{j+1}$,
	whence
	$\mathbf{PTKv}^{\ast}(\Sigma)\vdash\neg\bigwedge\Phi_j$, contradicting
	the $\mathbf{PTKv}^{\ast}(\Sigma)$-consistency of $\Phi_j$ (which holds by
	induction on $j$, with base case $\Phi_0=\{\chi\}$).
	
	\smallskip
	Let $\Gamma:=\Phi_N$. By construction, $\Gamma$ is a maximal
	$\mathbf{PTKv}^{\ast}(\Sigma)$-consistent subset of $\Sigma$ with
	$\chi\in\Gamma$. Since $\mathbf{PTKv}^{\ast}(\Sigma)$ extends
	$\mathbf{PTKv}^{+}$, the set $\Gamma$ is $\mathbf{PTKv}^{+}$-consistent and
	saturated over $\Sigma$, hence $\Gamma\in\mathsf{Type}(\Sigma)$. By the
	bridge lemma (Lemma~\ref{lem:bridge}, ($\Longrightarrow$) direction,
	contrapositively): since $\Gamma$ is
	$\mathbf{PTKv}^{\ast}(\Sigma)$-consistent, we conclude
	$\Gamma\in\mathsf{Type}^{\ast}(\Sigma)$.
\end{proof}

\begin{theorem}[Structured weak
	completeness]\label{thm:weak-completeness-final}
	For every formula $\chi\in\mathcal{L}_{\mathrm{PTMLKv}}$ and every finite
	closure $\Sigma$ of $\chi$, the system $\mathbf{PTKv}^{\ast}(\Sigma)$ is
	weakly complete with respect to the dual-threshold probabilistic
	knowing-value semantics:
	\[
	\mathbf{PTKv}^{\ast}(\Sigma)\nvdash\neg\chi
	\quad\Longrightarrow\quad
	\chi \text{ is satisfiable.}
	\]
\end{theorem}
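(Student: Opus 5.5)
The plan is to chain the three preceding results: the finite Lindenbaum extension (Lemma~\ref{lem:lindenbaum}), the canonical model (Definition~\ref{def:canonical-model}), and the truth lemma (Lemma~\ref{lem:truth-revised}).

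Fix $\chi$ and a finite closure $\Sigma$ of $\chi$, and assume $\mathbf{PTKv}^{\ast}(\Sigma)\nvdash\neg\chi$.

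\emph{Step 1: find a refined type containing $\chi$.} Since $\chi\in\Sigma$ by clause~(1) of Definition~\ref{def:finite-closure}, Lemma~\ref{lem:lindenbaum} applies directly. It yields a type $\Gamma\in\mathsf{Type}^{\ast}(\Sigma)$ with $\chi\in\Gamma$. In particular $\mathsf{Type}^{\ast}(\Sigma)\neq\emptyset$.

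\emph{Step 2: check that $M_{\Sigma}$ is a legitimate model.} First, $W_{\Sigma}\neq\emptyset$, because $\Gamma\in\mathsf{Type}^{\ast}(\Sigma)$ and $\mathcal{F}(\Gamma)\neq\emptyset$ (Definition~\ref{def:config-space}), so $w_{\Gamma,f,1}$ exists for any $f\in\mathcal{F}(\Gamma)$. Second, $D_{\Sigma}\neq\emptyset$, provided $T_\Sigma\neq\emptyset$. If $\chi$ contains no terms, I would simply adjoin a dummy value so that $K$ is nonempty. Third, $\mathsf{val}$ must be total on all of $\mathsf{Term}$, so I extend it to terms outside $T_\Sigma$ by a fixed constant $d_1$. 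These extra terms never occur in $\Sigma$, so the truth lemma is unaffected. Fourth, each $P_i^{\Delta}$ is a countably additive probability measure on $\mathcal{P}(W_\Sigma)$. This holds because it is a sum of nonnegative point masses over a countable set with total mass $1$, using the normalization computation in Definition~\ref{def:local-measure}. The feasible solutions underlying these measures come from Lemma~\ref{lem:FC-feasible}.

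\emph{Step 3: apply the truth lemma.} Pick any $f\in\mathcal{F}(\Gamma)$. Lemma~\ref{lem:truth-revised} gives $M_{\Sigma},w_{\Gamma,f,1}\models\chi$ because $\chi\in\Gamma$, so $\chi$ is satisfiable.

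The substantive work was already done in Lemmas~\ref{lem:bridge}, \ref{lem:FC-feasible} and~\ref{lem:truth-revised}. The only step needing care here is Step~2, making sure the canonical structure meets every clause of the model definition: a nonempty domain, a total $\mathsf{val}$ on all of $\mathsf{Term}$, and a genuine countably additive measure. My fix for the term-free case is the dummy value; one could instead stipulate $|K|\ge 1$ from the outset.
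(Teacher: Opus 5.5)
Your proposal is correct and follows the paper's proof: Lemma~\ref{lem:lindenbaum} gives $\Gamma\in\mathsf{Type}^{\ast}(\Sigma)$ with $\chi\in\Gamma$, Lemma~\ref{lem:FC-feasible} and the normalization in Definition~\ref{def:local-measure} make $M_{\Sigma}$ well defined, and the truth lemma yields $M_{\Sigma},w_{\Gamma,f,1}\models\chi$. Your extra checks in Step~2 are sound refinements of details the paper leaves implicit. These are a nonempty domain when $T_{\Sigma}=\emptyset$ (the relevant condition, rather than ``$\chi$ contains no terms'') and extending $\mathsf{val}$ to terms outside $T_{\Sigma}$; neither changes $\mathsf{Type}^{\ast}(\Sigma)$ or the truth lemma.
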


\begin{proof}
	Let $\chi$ be $\mathbf{PTKv}^{\ast}(\Sigma)$-consistent. By
	Lemma~\ref{lem:lindenbaum}, there exists
	$\Gamma_0\in\mathsf{Type}^{\ast}(\Sigma)$ with $\chi\in\Gamma_0$.
	
	Construct the canonical model $M_{\Sigma}$ as in
	Definition~\ref{def:canonical-model}. The construction is well-defined:
	Lemma~\ref{lem:FC-feasible} guarantees the existence of all required
	solutions, and Definition~\ref{def:local-measure} verifies
	$P_i^{\Delta}(W_{\Sigma})=1$ for each
	$\Delta\in\mathsf{Type}^{\ast}(\Sigma)$ and each $i\in\mathcal{A}$.
	
	By the truth lemma (Lemma~\ref{lem:truth-revised}),
	$M_{\Sigma},\,w_{\Gamma_0,f_0,1}\models\chi$
	for any $f_0\in\mathcal{F}(\Gamma_0)$ (which is nonempty). Hence $\chi$ is
	satisfiable.
\end{proof}

\begin{remark}\label{rmk:method-summary}
	Several methodological features deserve emphasis:
	\begin{enumerate}
		\item \textbf{Two-layer architecture.}\;
		The type space $\mathsf{Type}(\Sigma)$ is defined relative to the
		finitary base system $\mathbf{PTKv}^{+}$; the refined set
		$\mathsf{Type}^{\ast}(\Sigma)$ is obtained by iterative elimination
		(Definition~\ref{def:iterative-elim}); and the extended system
		$\mathbf{PTKv}^{\ast}(\Sigma)$ is defined proof-theoretically by
		adding finitely many axioms
		(Definition~\ref{def:ptkvstar-proof}). The bridge lemma
		(Lemma~\ref{lem:bridge}) establishes a perfect correspondence
		between the combinatorial elimination and proof-theoretic
		consistency, ensuring that the Lindenbaum extension produces types
		in $\mathsf{Type}^{\ast}(\Sigma)$ and that the three-way match
		(constraint variables / world labels / probability summation) holds
		exactly.
		\item \textbf{Stratified truth evaluation.}\;
		The truth lemma is organized as a single induction on modal depth,
		with the probability computation at each level relying only on the
		induction hypothesis at strictly lower levels.
		\item \textbf{Multi-agent value resolution.}\;
		Each agent's constraint system is solved independently over the
		shared world set via agent-indexed solutions
		$z_{\Delta,f}^{\ast,i}$, resolving value conflicts through the
		fine-grained distribution over assignment-configuration spaces
		$\mathcal{F}(\Delta)$.
		\item \textbf{Closure-relative formulation.}\;
		The system $\mathbf{PTKv}^{\ast}(\Sigma)$ depends on the choice of
		finite closure $\Sigma$
		(Remark~\ref{rmk:sigma-dependence}). This is standard in
		probabilistic logic: the completeness theorem states that for
		each formula $\chi$ and each finite closure $\Sigma\ni\chi$,
		$\mathbf{PTKv}^{\ast}(\Sigma)$-consistency of $\chi$ implies its
		satisfiability. Soundness of the base system $\mathbf{PTKv}^{+}$ is
		established separately (Theorem~\ref{thm:soundness}); the question
		of whether the full system $\mathbf{PTKv}^{\ast}(\Sigma)$ is sound
		with respect to the class of \emph{all} models is not needed for
		the weak completeness result and is left as a separate
		consideration.
	\end{enumerate}
\end{remark}


\section{Conclusion and Future Work}
\label{sec:conclusion}

This paper has studied how to represent, within a single formal framework, both probabilistic-threshold attitudes toward propositions and high-confidence attitudes toward term values. To this end, it introduced a \textbf{dual-threshold probabilistic knowing value logic}. The central design choice is to separate the threshold domains of propositional and value-oriented operators. The operator $K_i^\theta$ retains the full rational range $[0,1]\cap\mathbb{Q}$, thereby preserving expressive flexibility for probabilistic reasoning about propositions. By contrast, $Kv_i^\eta(t)$ is restricted to the high-threshold interval $(\frac{1}{2},1]\cap\mathbb{Q}$. This restriction is structural. Once $\eta>\frac{1}{2}$, two distinct values cannot both satisfy the threshold, and the uniqueness condition in the semantics of knowing-value becomes automatic.

On that basis, the paper proposed a \textbf{non-factive probabilistic semantics for knowing value}. In the intended reading, $K_i^\theta\varphi$ says that agent $i$ assigns probability at least $\theta$ to the proposition $\varphi$, whereas $Kv_i^\eta(t)$ says that agent $i$ uniquely locks onto some candidate value of $t$ with probability at least $\eta$. This value need not coincide with the actual value at the current world. Knowing-value is therefore treated as a form of \textbf{non-factive high-confidence value locking}. In this way, the framework gives propositional and value-oriented operators a \textbf{coherent} epistemic profile---both are graded, non-factive, and agent-relative---and avoids an otherwise awkward asymmetry between non-factive probabilistic attitudes toward propositions and factive value cognition.

The semantic analysis established the core structural properties of the framework. Most importantly, the high-threshold restriction makes uniqueness automatic (Proposition~\ref{prop:unique}). In addition, the framework validates threshold monotonicity \textbf{(KvMon)}, implication propagation \textbf{(KvImpl)}, complementary exclusion \textbf{(KvExcl)}, additivity over mutually exclusive events \textbf{(KvAdd)}, and substitution principles under probability~$1$ \textbf{(KvSubst)}. These results provide the semantic basis for the subsequent proof theory and show that the dual-threshold design yields a stable interaction between probabilistic and value-oriented reasoning.

On the proof-theoretic side, the paper introduced the systems $\mathbf{PTKv}^{-}$ and $\mathbf{PTKv}^{+}$ and proved their soundness. The main technical result is a \textbf{structured weak-completeness theorem} for the high-threshold fragment (Theorem~\ref{thm:weak-completeness-final}). To obtain it, the paper introduced the local linear consistency condition \textbf{(Lin)} and the joint value-fiber consistency condition \textbf{(Fib)}, and developed a two-layer model construction based on \textbf{type-space distributions and assignment-configuration mappings}. At the first layer, local probabilistic distributions realize the propositional threshold constraints induced by finite closures. At the second layer, assignment configurations refine each type so that value-sensitive constraints can be realized without altering the relevant object-language profile. The truth lemma is established by a single induction on modal depth, which ensures that the probability computation at each level relies only on the semantic identification at strictly lower levels, thereby resolving the mutual dependence between measure realization and truth evaluation. This construction shows that probabilistic mass allocation and value-sensitive uniqueness constraints can be handled within a single metatheoretic framework.

Taken together, these results show that probabilistic reasoning about propositions and high-confidence reasoning about values can be treated within one multi-agent formalism. Relative to classical epistemic logic, the framework makes graded uncertainty explicit. Relative to standard knowing-value logic, it replaces relational agreement by probabilistic confidence. Relative to probabilistic epistemic logic, it adds value-oriented cognition as a genuine expressive dimension. In this sense, the paper provides a \textbf{unified formal framework} that bridges probabilistic epistemic logic and knowing-value logic.

The privacy scenario discussed in the introduction illustrates this point in concrete AI-oriented terms: the framework can distinguish between ordinary probabilistic acceptance of propositions and high-confidence locking onto one specific candidate value of a sensitive attribute, even when that locked value is not the actual one.

\medskip
Several directions remain for future work.

\emph{First}, the present weak-completeness result is established for the extended system $\mathbf{PTKv}^{\ast}$, where \textbf{(Lin)} and \textbf{(Fib)} function as meta-level admissibility conditions. A natural next step is to determine whether these conditions are admissible over $\mathbf{PTKv}^{+}$ alone, or whether they can be replaced by purely axiomatic schemata. One promising route is through \textbf{Farkas-type dualization}: the infeasibility of a mixed linear system can be certified by a dual witness, which may in turn be encodable as a finite combination of axiomatic instances. Success along these lines would move the current result closer to a stronger axiomatic completeness theorem with a recursively enumerable proof system.

\emph{Second}, the current framework deliberately restricts $Kv_i^\eta(t)$ to the high-threshold interval. This is exactly what makes uniqueness automatic and keeps the metatheory manageable. A more general theory of probabilistic knowing-value at arbitrary thresholds would have to deal directly with the lower-threshold regime, where several competing values may satisfy the threshold simultaneously. That problem requires a more refined analysis of threshold behavior and a different axiomatization.

\emph{Third}, although the logic is multi-agent, it does not yet include group-level operators such as common knowledge, distributed knowledge, or collective knowing-value. Extending the framework in this direction would make it possible to study genuinely collective forms of probabilistic value cognition and to test how the present dual-threshold design should be generalized beyond the individual-agent setting.

\emph{Fourth}, the paper is limited to a static setting. A natural continuation is to study dynamic extensions, including public announcements, observation events, and probabilistic revision. This is especially relevant when subjective probabilities and value information evolve together. In such settings, one would like to understand whether the present semantics of non-factive high-confidence value locking can be preserved under update.

\emph{Fifth}, the computational properties of the framework remain to be investigated. In standard probabilistic epistemic logics, satisfiability is typically \textsc{NP}-hard or \textsc{PSPACE}-complete, depending on the language and the class of models. The presence of value-locking constraints introduces additional combinatorial structure through the assignment-configuration space $\mathcal{F}(\Delta)$, and its impact on the complexity of satisfiability checking and model checking deserves systematic study. Such an analysis would also clarify the practical feasibility of automated reasoning tools for the logic.

\emph{Sixth}, the framework is primarily foundational, but its intended applications are clear. It can express whether a privacy attacker locks onto one specific candidate value of a sensitive attribute with high confidence, whether an agent locks onto the value of a session key with high probability, or whether an agent tracks a crucial parameter in uncertain multi-agent decision settings. Developing concrete case studies and automated reasoning methods for such scenarios is therefore an important direction for future research.

Overall, the paper supports a simple methodological claim: a unified treatment of probabilistic propositional cognition and high-confidence value cognition is possible, but only if the asymmetry between their threshold behaviors is built into the language and reflected in the semantics and completeness construction. The \textbf{dual-threshold design}, the semantics of \textbf{non-factive high-confidence value locking}, and the completeness method based on \textbf{type-space distributions and assignment-configuration mappings} together provide such a framework. We hope that it can serve as a useful foundation for further work on probabilistic knowing-value logic, value-oriented epistemic modeling, and knowledge representation under uncertainty.


\end{document}